\documentclass[aps,pra,reprint,superscriptaddress,amsmath,amssymb,floatfix]{revtex4-2}

\usepackage{graphicx}
\usepackage{xcolor}
\usepackage{amsthm}
\usepackage[expansion=false]{microtype}
\usepackage{hyperref}
\usepackage{orcidlink}

\hypersetup{
    colorlinks=true,
    linkcolor=blue,
    filecolor=magenta,
    urlcolor=blue,
    citecolor=blue,
    pdftitle={No Trading Strategy Can Win on Every Price Path: Computability, Randomness, and the Limits of Universal Trading},
    pdfauthor={Karl Svozil},
    pdfsubject={Computational, probabilistic, and financial limits of universal trading},
    pdfkeywords={trading, no-arbitrage, diagonalization, Halting Problem, rule inference, Martin-Lof randomness, algorithmic information theory, Busy Beaver, time reversal, market impact}
}

\theoremstyle{plain}
\newtheorem{theorem}{Theorem}

\newtheorem{corollary}[theorem]{Corollary}
\newtheorem{proposition}[theorem]{Proposition}
\theoremstyle{definition}
\newtheorem{definition}{Definition}
\newtheorem{criterion}{Criterion}
\theoremstyle{remark}
\newtheorem{remark}{Remark}

\begin{document}

\title{No Trading Strategy Can Win on Every Price Path:\\
Computability, Randomness, and the Limits of Universal Trading}

\author{Karl Svozil\,\orcidlink{0000-0001-6554-2802}}
\affiliation{Institute for Theoretical Physics, TU Wien, Wiedner Hauptstra{\ss}e 8-10/136, A-1040 Vienna, Austria}
\email{karl.svozil@tuwien.ac.at}
\homepage{http://tph.tuwien.ac.at/~svozil}

\date{July 29, 2026}

\begin{abstract}
Every universal-trading claim pairs a trader with a market---a path, generator,
or law. For any total deterministic computable trader, its code yields a fixed
computable countermarket with proportional price moves opposing its positions;
hence no such trader wins on every computable path. Gold-style learning cannot
identify every computable binary market rule from history. Separately,
Turing-universal generators make certification of unbounded-future events
undecidable; Busy-Beaver growth defeats every computable description-size
waiting schedule. A passive Martin-L\"of-random record is incompressible and
prevents effective test-capital processes from becoming unbounded;
no-arbitrage supplies a distinct financial boundary. Together these results
form an expository taxonomy: repeatable success requires a market restriction,
benchmark, risk premium, or informational advantage, while time reversal
supplies a simple stress test.
\end{abstract}

\maketitle

\section{Introduction}
\label{sec:intro}

Throughout this paper the elementary configuration has two sides: a
\emph{trader} and a \emph{market}. At each date the trader converts observed
history into a position; the market supplies the next price. Depending on the
question, the market side is a realized path, a path-generating program, a
probability law, or a price process constrained by no-arbitrage. Holding the
trader fixed while changing the admitted market side reveals which universal
claim is at issue.

A strategy is \emph{universal} here only if, after discounting by a specified
numeraire, it produces strictly positive terminal gain from zero initial
capital on every admissible trajectory. Zero cost, self-financing,
admissibility, and discounted gain distinguish this claim from positive wealth
obtained by investing positive capital. Its quantifier order is
\begin{equation}
  \exists\,\sigma\ \forall\,\mathcal E:
  \qquad \sigma\ \hbox{wins against }\mathcal E,
  \label{eq:universalquantifiers}
\end{equation}
whereas diagonalization establishes
\begin{equation}
  \forall\,\sigma\ \exists\,\mathcal E_\sigma:
  \qquad \sigma\ \hbox{does not win against }\mathcal E_\sigma.
  \label{eq:counterquantifiers}
\end{equation}

For a fixed total deterministic computable trader, its code can be embedded in
a total program that recursively chooses the next bounded proportional price
move opposite to the trader's position. The resulting countermarket is one
fixed computable path; any finite prefix can be generated offline. A trader that
never takes a position earns zero, which already refutes a guarantee of strict
profit. This is the trading counterpart of the next-bit construction
\(x_0=1-F(\varnothing)\) and
\(x_t=1-F(x_0,\ldots,x_{t-1})\) for \(t\geq1\): it defeats the fixed proposed
rule, not every rule on one exceptional sequence.

Three complementary limits use different market information. Gold's learner
sees an accumulating history and must eventually stabilize on a correct
program; no effective learner identifies every binary total recursive time
function, although restricted classes such as the primitive-recursive time
functions are identifiable in the limit~\cite{go-65,go-67}. A source-code
procedure that settled every future event of Turing-universal generators would
instead decide the Halting Problem, while Busy-Beaver growth excludes a
computable worst-case waiting horizon. Finally, relative to a specified
computable measure, a Martin-L\"of-random record escapes every effective null
test, has incompressible prefixes, and prevents effective nonnegative
test-capital from becoming unbounded. No-arbitrage supplies a separate
pricing-measure boundary.

Repeatable practical success must therefore identify structure absent from the
unrestricted market class: persistence, mean reversion, a risk premium, a
benchmark, lawful information, or another distributional asymmetry. Large,
crowded, or public strategies may also change the distribution through price
impact and adaptation. This falsifiable restriction on the market side is the
operational meaning of ``insight'' used below.

\begin{table*}[t]
\caption{\label{tab:claimmap}Distinct claims organized by the trader--market
configuration. They are complementary, not interchangeable.}
\centering
\begin{ruledtabular}
\begin{tabular}{p{0.18\textwidth}p{0.27\textwidth}p{0.24\textwidth}p{0.23\textwidth}}
\textbf{Mechanism} & \textbf{Market side} & \textbf{Formal scope}
& \textbf{Conclusion} \\
\colrule
Diagonal\newline\mbox{countermarket}
& Fixed total program constructed from the fixed trader's code
& Pathwise worst case:
\(\forall\sigma\,\exists P_\sigma\)
& No total deterministic computable trader wins on every computable price path \\
Gold identification\newline in the limit
& Accumulating computable history, or interaction with a black box
& Success is eventual stabilization on a correct program, without a known
stopping date
& No effective learner identifies every computable binary time function;
restricted classes can be identified \\
Halting reduction and\newline Busy Beaver
& Total program or generator, under an explicit promise
& Individual encoded computation over an unbounded horizon
& No general computable profit or market-event decider, or computable
description-size waiting horizon \\
Algorithmic\newline randomness
& Passive binary record under the computable fair-coin reference law
& Each Martin-L\"of-random path; such paths have measure one
& No constructive supermartingale, hence no computable fair-game martingale,
becomes unbounded \\
No-arbitrage
& Stochastic price process admitting a coherent pricing measure
& Almost-sure payoff conditions under an ELMM
& No admissible zero-cost classical arbitrage \\
Cover universality
& Passive bounded sequence and a stated benchmark class
& Pathwise, but benchmark-relative and asymptotic
& Vanishing per-period regret, not guaranteed positive profit \\
\end{tabular}
\end{ruledtabular}
\end{table*}

Table~\ref{tab:claimmap} is the paper's principal contribution: a taxonomy that
keeps distinct several statements commonly summarized as ``no universal
strategy.'' The diagonal countermarket and its proofs are expository
reformulations rather than a claim of a newly discovered impossibility theorem.
The analysis connects this quantifier map to the literature on unpredictable
prices~\cite{Bachelier1900,Samuelson1965,Fama1970}, Gold learning, computability
and randomness, no-arbitrage and NFL, time-reversal stress testing, the Wheel,
Cover universality, and conditional FAPP strategies.

\section{The Trader and the Market: Computability and Randomness}
\label{sec:computability}

We now formalize the trader--market pair. The trader is a program
\(M_\sigma\) mapping each observed finite price history to a position; the
market side supplies the price history. Four specifications lead to four
different boundaries: a fixed computable countermarket tailored to a trader
defeats its claim of pathwise universality; Gold's history protocol precludes
identifying every computable rule in the limit; a computationally universal
class of market generators makes uniform certification of unbounded future
events undecidable; and a passive algorithmically random record defeats
effective fair betting.

\subsection{A Computable Countermarket for Every Trader}

Let \(M_\sigma\) be a \emph{total} computable map from every finite computable
price history \(H_t\) to a rational position \(w_t\). Totality models the
practical requirement that a deployed algorithm return an action before a
deadline. Rational outputs model finite-precision orders and, crucially, make
comparison with zero decidable. Let \(\mathcal P_{\mathrm{comp}}^{+}\) denote
the positive rational-valued computable price paths.

\begin{samepage}
\begin{theorem}[Computable Diagonal Countermarket]
  \label{thm:counterpath}
  For every total deterministic computable strategy \(M_\sigma\), there exists
  a fixed path \(P^{M_\sigma}\in\mathcal P_{\mathrm{comp}}^{+}\) on which its
  self-financing gain
  \[
    G_T=\sum_{t=0}^{T-1}w_t(P_{t+1}-P_t)
  \]
  is non-positive at every finite horizon \(T\). It is strictly negative after
  the strategy has taken a nonzero position at least once.
\end{theorem}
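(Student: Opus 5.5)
The plan is to build the countermarket recursively from the code of \(M_\sigma\), using proportional moves so the price stays positive and rational. Fix a rational \(P_0>0\), say \(P_0=1\), and a rational step size \(\delta\in(0,1)\), say \(\delta=1/2\). Given the history \(H_t=(P_0,\ldots,P_t)\), run \(M_\sigma\) on \(H_t\) to obtain \(w_t\in\mathbb Q\). Then set
\[
  P_{t+1}=
  \begin{cases}
    P_t(1-\delta), & w_t>0,\\
    P_t(1+\delta), & w_t<0,\\
    P_t(1+\delta), & w_t=0.
  \end{cases}
\]
The choice made when \(w_t=0\) is irrelevant to the gain; any positive rational move works. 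Because \(M_\sigma\) is total, each step terminates. Because \(w_t\) is rational, the three-way sign test is decidable. By induction every \(P_t\) is a positive rational, and the map \(t\mapsto P_t\) is computed by a single total program that simulates \(M_\sigma\) on the prefix it has already generated. So \(P^{M_\sigma}\in\mathcal P_{\mathrm{comp}}^{+}\), and it is a fixed path that does not depend on \(T\). Determinism of \(M_\sigma\) matters here: the trader's position on the actual history is exactly the \(w_t\) the construction anticipated.

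Next I verify the gain. Each increment satisfies \(w_t(P_{t+1}-P_t)=-\delta P_t|w_t|\le 0\). This holds with equality exactly when \(w_t=0\), since \(P_t>0\) and \(\delta>0\). Summing gives
\[
  G_T=-\delta\sum_{t=0}^{T-1}P_t|w_t|\le 0
\]
for every finite \(T\). Moreover, \(G_T<0\) as soon as some \(t<T\) has \(w_t\neq 0\), which is the second claim. Here ``after the strategy has taken a nonzero position'' is read as a position held over at least one completed price step.

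The only real subtlety is the \emph{self-reference and well-definedness} step: making sure that ``the path on which \(M_\sigma\) trades'' is the same object as ``the path built from \(M_\sigma\)'s outputs.'' There is no circularity, because \(w_t\) depends only on \(H_t\), which is already fixed before \(P_{t+1}\) is chosen. So a plain primitive recursion along \(t\) suffices, and no recursion-theorem argument is needed. I would state this explicitly, together with the remark that the sign test needs rational (or otherwise decidably signed) outputs. With real-valued computable positions, deciding \(w_t>0\) versus \(w_t=0\) would be undecidable, and the construction would break.
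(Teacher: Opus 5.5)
Your proposal is correct and follows essentially the paper's argument: the same recursive countermarket with proportional moves \(P_t(1\mp\epsilon)\) opposing the sign of \(w_t\), and the same per-period bound \(w_t(P_{t+1}-P_t)\leq 0\), strict when \(w_t\neq0\), summed over \(t<T\). The only difference is cosmetic: you move the price up when \(w_t=0\), whereas the paper leaves it flat. Either choice gives a zero increment, so the gain is unaffected.
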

\end{samepage}

\begin{proof}
Choose rational \(P_0>0\) and rational \(0<\epsilon<1\). Recursively compute
\(w_t=M_\sigma(P_0,\ldots,P_t)\) and set
\begin{equation}
P_{t+1}=
\begin{cases}
P_t(1-\epsilon),&w_t>0,\\
P_t(1+\epsilon),&w_t<0,\\
P_t,&w_t=0.
\end{cases}
\label{eq:adversary}
\end{equation}
Because \(M_\sigma\) is total and rational-valued, this recursion is a total
program for one fixed infinite price path. Its prices remain positive and
computable. In each period,
\(w_t(P_{t+1}-P_t)\leq0\), with strict inequality when \(w_t\neq0\).
Summing proves the claim at every finite horizon.
\end{proof}

\begin{remark}[Logical and Economic Scope of the Countermarket]
\label{rem:counterpath-scope}
Theorem~\ref{thm:counterpath} uses the trader's \emph{code}, not a physical
market's observation of a deployed order. Once \(M_\sigma\) is fixed, the
countermarket program is fixed and can generate any desired finite prefix
offline. It is therefore a lawful, deterministic, computable counterexample,
although not necessarily an efficient one computationally. If the market side
is regarded as an opponent whose objective is to prevent this trader's gain,
the rule is rational relative to that objective. It is not thereby shown to be
an economic equilibrium or an arbitrage-free price process, and another trader
may profit on it. Such a trader has, in turn, its own computable countermarket.
\end{remark}

This is a direct strategy-relative diagonal construction. The proposed
universal trader is used to compute an object outside its success set: long is
paired with down, short with up, and neutrality with no price change. Just as
the complement sequence defeats a purported predictor of every computable
sequence, \(P^{M_\sigma}\) defeats the purported trader on every horizon.
Enumeration of all programs is unnecessary because the universal claim itself
supplies the program to be diagonalized against. The quantifiers are
\(\forall M_\sigma\,\exists P^{M_\sigma}\). There need not be one path on
which every strategy loses; for example, permanently long and permanently
short positions cannot both lose on the same one-step price move.

\subsection{Learning a Market Rule from History: Gold's Limit}

A trader normally receives neither the market generator's source code nor a
proof of its law. The trader sees an expanding record
\(x_0,x_1,\ldots\) and must infer from each finite prefix how it will continue.
Gold's function-identification model formalizes exactly this information flow.
At time \(n\), an effective learner \(L\) maps
\((x_0,\ldots,x_n)\) to the index of a Turing machine. It
\emph{identifies \(x\) in the limit} if, from some finite date onward, every
guess is the same index and that machine computes the whole sequence. The
learner need not know when its last change of mind has occurred
\cite{go-65,go-67}.

Let \(\mathrm{REC}_2\) denote the total computable binary sequences. Gold's
negative theorem for binary-valued time functions has the trader--market
quantifier order, with \(L\) ranging over effective learners,
\begin{equation}
  \begin{aligned}
  &\forall L\ \exists x\in\mathrm{REC}_2: \\
  &\quad L(x_0,\ldots,x_n)\text{ does not stabilize}\\
  &\quad \text{on a program for }x .
  \end{aligned}
  \label{eq:goldquantifiers}
\end{equation}
The adverse history depends on the learner. This is closely analogous to
\(\forall M_\sigma\,\exists P^{M_\sigma}\), although Gold's failure criterion
is failure to converge to a fixed correct program name, not financial loss.

The corresponding claim about literal next-move forecasting is sharper and
has the short direct diagonal proof already used above. If a total computable
predictor \(F\) outputs one bit from every finite binary history, the recursive
definition
\[
  x_0=1-F(\varnothing),
  \qquad
  x_t=1-F(x_0,\ldots,x_{t-1})\quad(t\geq1)
\]
produces a computable market record on which \(F\) is wrong at every date.
Thus the proposed statement---every total next-move predictor has a computable
countermarket---is correct, but it is this direct prediction diagonalization,
not the literal statement of Gold's identification theorem. Nonconvergence of
program names alone does not count the learner's next-step errors.

The opposite-bit construction also appears explicitly in Legg's analysis of
universal prediction~\cite{Legg2006}. The qualifier \emph{computable} is
essential. Solomonoff induction defines an uncomputable universal predictive
ideal; for computable probabilistic environments its conditional predictions
converge, with quantitative error bounds developed by Hutter
\cite{solomonoff,solomonoff-II,solomonoff1,Hutter2001}. It therefore evades,
rather than contradicts, the diagonal result: it is not a total executable
trading rule, and predictive convergence is not Gold's requirement of
stabilizing on one exact generating program. For probability-valued forecasts,
V'yugin gives a further, distinct impossibility: a randomized construction can
make any partial computable forecaster fail calibration with probability
arbitrarily close to one~\cite{Vyugin2006}. Calibration, next-bit error, and
identification in the limit are three different criteria.

Gold's results also show why the restriction on the market side matters. Every
effectively enumerable class of total functions is identifiable in the limit;
in particular, primitive-recursive time functions are. In Gold's interactive
black-box model, where the eventual hypothesis must reproduce future
input--output behavior, finite-state boxes are weakly identifiable in the
limit, whereas the class of primitive-recursive black boxes is not
\cite{go-65}. These are different object classes, not contradictory claims.
Gold's 1967 language model further makes explicit that learnability depends on
the admitted class, the presentation of the evidence, and the allowed names
\cite{go-67}. For trading, a forecast can therefore be justified only relative
to a specified restriction on possible markets.

\subsection{Universal Computation and the Halting Barrier}

Gold's learner works from observations and may revise its hypothesis
indefinitely. The next results ask a separate, stronger operational question:
can one terminating program inspect arbitrary encoded traders or market
generators and correctly decide their unbounded future behavior? The preceding
diagonal construction already disproves a trader that wins on every computable
path and does not invoke the Halting Problem. The source-code
event-certification question does.

\begin{theorem}[Undecidability of Eventual Profit]
  \label{thm:profit-undecidable}
  Fix the positive computable price path \(P_t=t+1\), \(t\in\mathbb{N}\).
  There is no algorithm which, under the promise that its input code computes
  a total trading program \(M_\sigma\), correctly decides for every such code
  whether
  \[
    \exists T\geq1:\qquad
    G_T(M_\sigma;P)=
    \sum_{t=0}^{T-1}w_t(P_{t+1}-P_t)>0.
  \]
\end{theorem}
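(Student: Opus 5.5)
The plan is a many-one reduction from the Halting Problem for Turing machines on empty input. I will construct a total trading program \(M_{e}\) from each machine index \(e\), effectively in \(e\). On the fixed path \(P_t=t+1\), every increment is \(P_{t+1}-P_t=1\), so the gain reduces to \(G_T=\sum_{t=0}^{T-1}w_t\). The construction will arrange that \(\exists T\geq 1:\ G_T>0\) holds exactly when machine \(e\) halts. A decider for eventual profit, restricted to the promised inputs, would then decide halting, which is a contradiction.

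The construction is as follows. At date \(t\), on history \((P_0,\ldots,P_t)\), the program \(M_e\) simulates machine \(e\) on empty input for exactly \(t\) steps. It outputs \(w_t=1\) if the machine halts at step exactly \(t\), in the sense that it has halted within \(t\) steps but not within \(t-1\) steps. Otherwise it outputs \(w_t=0\). This defines \(w_t\) for every \(t\), including \(t=0\), where a machine that halts in zero steps gives \(w_0=1\).

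The steps are then routine checks.
\begin{itemize}
\item \emph{Totality and rational output.} Every bounded simulation terminates, and every output is \(0\) or \(1\). So each \(M_e\) meets the promise of the theorem.
\item \emph{Effectiveness.} The map \(e\mapsto\) code of \(M_e\) is computable, by the s-m-n theorem or simply by writing the code down explicitly.
\item \emph{Non-halting case.} If the machine never halts, every \(w_t=0\). Hence \(G_T=0\) for all \(T\), and no positive gain ever occurs.
\item \emph{Halting case.} If the machine halts at step \(s\), then \(w_s=1\) and \(w_t=0\) for all \(t\neq s\). Hence \(G_T=1>0\) for every \(T\geq s+1\geq 1\).
\end{itemize}
Composing the supposed decider with \(e\mapsto M_e\) would therefore decide the halting set \(K\), which is impossible.

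I expect no serious obstacle. The one point needing care is that the promise is a restriction on inputs: the decider is only required to be correct on codes of total programs. So the reduction must output only total programs, and that is exactly why the simulation is step-bounded rather than a direct run of machine \(e\). One minor check is the edge case \(T\geq 1\) combined with halting at step \(0\); this is handled because \(w_0\) enters \(G_1\).

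It is also worth remarking that the set of codes with eventual positive gain is \(\Sigma_1\), since a positive partial sum can be found by search. The set is therefore semi-decidable but not decidable. This contrasts with Theorem~\ref{thm:counterpath}, which needed no halting argument.
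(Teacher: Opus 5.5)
Your proof is correct and is essentially the paper's argument: a Halting reduction through step-bounded simulation, so every produced code is total, with the unit increments of \(P_t=t+1\) reducing \(G_T\) to \(\sum_{t<T}w_t\). The only difference is cosmetic. The paper sets \(w_t=1\) at every date once \(A(x)\) has halted within \(t+1\) steps, a persistent long position, whereas you take a single unit position at the exact halting step; both give \(\exists T\geq1:\ G_T>0\) if and only if the machine halts.
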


\begin{proof}
Assume such a profit certifier exists. Given an arbitrary Turing machine
\(A\) and input \(x\), construct a trading program \(M_{A,x}\) as follows.
At date \(t\), simulate \(A(x)\) for \(t+1\) steps. If it has halted within
those steps, set \(w_t=1\); otherwise set \(w_t=0\). Every date requires only
a finite simulation, so \(M_{A,x}\) is total. Along \(P_t=t+1\), every long
position earns one unit in the next period. Hence \(G_T>0\) for some \(T\) if
and only if \(A(x)\) eventually halts. A certifier for eventual trading profit
would therefore decide the Halting Problem, which is impossible
\cite{turing-36}.
\end{proof}

The reduction can also be placed directly on the market side, which is the
form most closely connected to source-code event certification.

\begin{theorem}[Undecidability of a Market Event]
  \label{thm:market-undecidable}
  There is no algorithm which, under the promise that its input code computes
  a total positive-price generator \(G\), correctly decides for every such
  code whether its generated path \(P^G\) ever satisfies \(P_t^G>P_0^G\).
\end{theorem}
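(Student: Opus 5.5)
We reduce the Halting Problem to the market-event question, mirroring the proof of Theorem~\ref{thm:profit-undecidable} but placing the simulated computation on the market side rather than in the trader. Assume, for contradiction, that a certifier \(D\) exists which, on every code promised to compute a total positive-price generator, correctly reports whether \(P_t^G>P_0^G\) for some \(t\). Given an arbitrary Turing machine \(A\) and input \(x\), we effectively construct the code of a generator \(G_{A,x}\) and feed it to \(D\).

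The construction is as follows. Set \(P_0=1\). At date \(t\geq1\), simulate \(A(x)\) for \(t\) steps. If it has halted within those steps, output \(P_t=2\); otherwise output \(P_t=1\). Three properties must be checked. First, each date requires only a finite, bounded simulation, so \(G_{A,x}\) is total and the promise is met regardless of whether \(A(x)\) halts. Second, all prices are positive rationals. Third, the map \((A,x)\mapsto G_{A,x}\) is computable, since the code is obtained by plugging the description of \(A\) and \(x\) into a fixed template.

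The event \(\exists t:\ P_t>P_0\) holds if and only if some finite simulation sees \(A(x)\) halt, that is, if and only if \(A(x)\) halts. Composing the computable reduction with \(D\) would therefore decide the Halting Problem, contradicting \cite{turing-36}.

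The only delicate point is the promise. The certifier is only required to answer correctly on codes of total generators, so the reduction must produce totality by construction and never rely on the unknown halting behaviour. The step-bounded simulation guarantees exactly this. One could alternatively derive the result from Theorem~\ref{thm:profit-undecidable} by turning a trader's cumulative gain into a price path, for instance \(P_t=1+G_t\) along \(P_t=t+1\). However, positivity would then have to be argued, because gains are nonnegative in that reduction. The direct construction avoids this bookkeeping, so it is the one I would present.
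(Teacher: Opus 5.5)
Your proposal is correct and is essentially the paper's proof: the same generator \(P_0=1\), \(P_t=1+\mathbf{1}\{A(x)\text{ halts within }t\text{ steps}\}\) for \(t\geq1\), with totality secured by step-bounded simulation so that the promise is met regardless of whether \(A(x)\) halts. One small correction to your aside: in the alternative route via Theorem~\ref{thm:profit-undecidable}, the nonnegativity of the gains is exactly what makes \(P_t=1+G_t\geq1\) positive, so no extra positivity argument would be needed there.
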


\begin{proof}
Given a Turing machine \(A\) and input \(x\), define the total generator by
\(P_0^{A,x}=1\) and, for \(t\geq1\),
\begin{equation}
  P_t^{A,x}
  =1+\mathbf{1}\{A(x)\text{ halts within }t\text{ steps}\},
  \qquad t\geq1 .
  \label{eq:market-halting}
\end{equation}
Each quoted price requires only a finite \(t\)-step simulation, so the
generator is total and computable. Its initial price is one, and it ever quotes
a higher price if and only if \(A(x)\) halts. A universal decider for the stated
market event would therefore decide the Halting Problem.
\end{proof}

Theorem~\ref{thm:market-undecidable} makes ``universal computation'' precise:
the admissible generators can emulate an arbitrary machine and expose its
outcome through a price event. No algorithm can then certify every future
event of every such market.

This is not Gold's theorem. The reduction receives source code and must halt
with a yes-or-no answer; Gold's learner receives accumulating data and need
only stabilize eventually. Indeed, the paths in
Eq.~\eqref{eq:market-halting} are identifiable in the limit: guess the flat
path, then switch permanently if the price becomes two. No finite date,
however, certifies that the initial guess is final. Thus halting excludes
uniform certification, whereas Gold excludes identification in the limit for
the class of all binary total recursive time functions. Restricted model
classes may still be learnable
\cite{blum75blum,angluin:83,ad-91,2016-AlexanderKarlSvozil-ml,moore}.

\paragraph{Busy-Beaver horizons: finite glimpses and noncomputable waiting}

Fix a step-counted universal machine \(V\) with an effective compiler that adds
only fixed overhead for a given program schema, and define the runtime Busy
Beaver function by
\begin{equation}
  \operatorname{BB}(n)
  =
  \max\{\operatorname{time}_V(p):
  |p|\leq n\text{ and }V(p)\text{ halts}\},
  \label{eq:busybeaver}
\end{equation}
with value zero if the set is empty. The input \(n\) is a program-description
budget, not a date; \(\operatorname{BB}(n)\) is a runtime, which becomes a
possible trading date when the computation is embedded in a market generator.
A small increase in rule length may therefore require waiting beyond every
computably prescribed scale.

Each \(\operatorname{BB}(n)\) is finite, and selected low values can sometimes
be proved by exhaustive simulation plus proofs that the remaining programs do
not halt. This is not uniform ``computability for small arguments'': any finite
table can be hard-coded, and its apparent pattern licenses no extrapolation. If
a total computable \(h\) upper-bounded \(\operatorname{BB}\), simulating every
program of length at most \(n\) for \(h(n)\) steps would decide halting.
Eventual domination uses a further program-schema argument. For each \(n\), a
program of \(O(\log n)\) bits can encode \(n\), compute \(h(n)\), execute more
than \(h(n)\) dummy steps, and halt. For all sufficiently large \(n\), its
compiled \(V\)-code has length at most \(n\), so
\(\operatorname{BB}(n)>h(n)\). Thus, up to the fixed machine convention,
\(\operatorname{BB}\) eventually exceeds every total computable
function~\cite{rado,chaitin-bb}.

Calude, Dinneen, and Shu illustrate the finite/infinite divide. For one explicit
universal self-delimiting register machine, computation and mathematical proof
settled halting through 84-bit programs and certified 64 initial bits of its
halting probability \(\Omega_U\). They did not compute Busy Beaver values, and
the work needed to recover the relevant halting programs from longer supplied
prefixes of \(\Omega_U\) outruns every computable bound. Their result is a
formidable finite glimpse, not a scalable computation of an uncomputable object
\cite{2002-glimpseofran}.

Applied to Eq.~\eqref{eq:market-halting}, a short generator can quote
\(P_t=1\) for a Busy-Beaver-scale interval and then quote \(P_t=2\). Across
encoded programs of length at most \(n\), the last halt can occur at
\(\operatorname{BB}(n)\), up to fixed wrapper overhead. Knowing that horizon
would make continued silence a certificate of no later move. Equivalently,
with \(\mathrm{TotComp}\) denoting total computable size-dependent schedules,
\begin{equation}
  \begin{aligned}
  \forall h\in\mathrm{TotComp}\ \exists p:\quad
    &V(p)\text{ halts},\\
    &\operatorname{time}_V(p)>h(|p|).
  \end{aligned}
  \label{eq:bb-horizon}
\end{equation}
If \(c\) is the wrapper's fixed description-length overhead, it is absorbed by
the computable envelope \(H(n)=\max_{k\leq n+c}h(k)\). Every quote at a
specified finite date remains
computable by bounded simulation; the generator neither computes Busy Beaver
nor selects its maximizer uniformly. Noncomputability lies in certification
across the family. Each realized path is deterministic and computable, hence
not Martin-L\"of random under fair-coin coding.

The Taylor comparison is qualitative, not a race between steep curves. For an
effectively specified analytic law \(f\),
\begin{equation}
  \begin{aligned}
  f(t_0+\Delta)
    &=f(t_0)+f'(t_0)\Delta+R_2(\Delta),\\
  R_2(\Delta)&=O(\Delta^2),
  \end{aligned}
  \label{eq:taylor-local}
\end{equation}
under the usual local smoothness and remainder conditions. Harmonic functions
are real analytic in the interior of their domains, so discarding
\(R_2(\Delta)\) gives a controlled local linearization when an effective
remainder bound is available and \(\Delta\) is small; it is not a long-horizon
guarantee.

Busy Beaver is instead a discrete noncomputable extremal function with no
computable Taylor envelope. A polynomial can interpolate any finite table of
its values perfectly while saying nothing about the next value, and every
total computable waiting schedule derived from a polynomial, exponential, or
Taylor model is eventually exceeded. The comparison concerns effectively
specified analytic laws: coefficients or boundary data can themselves encode
noncomputable information.

This barrier is stronger than sensitive dependence or visual complexity alone
in a precise decision-theoretic sense. A finite contractive iterated-function
system or chaotic map with computable data can generate an intricate phenotype
from a short rule, yet Turing universality additionally creates questions about
the unbounded future for which no terminating general algorithm exists. The
categories can overlap when a dynamical system performs universal computation.
This is a worst-case claim across generator descriptions, not evidence that
typical financial markets exhibit Busy-Beaver delays
\cite{barnsley:88,moore,calude:037103}.

These obstructions concern unbounded horizons. At any specified finite horizon,
total programs can still be run and their realized prices and gains computed.

\subsection{Worst-Case Value and Private Randomization}

Returning to the direct countermarket, its finite-horizon value can be stated
explicitly, and the role of a trader's hidden randomization can be separated
from deterministic pathwise opposition.

\begin{corollary}[Maximin Value]
  \label{cor:maximin}
  At every finite horizon \(T\), if the neutral strategy is allowed, then
  \[
    \sup_{\text{\rm total computable }M_\sigma}
    \inf_{P\in\mathcal P_{\mathrm{comp}}^{+}}
    G_T(M_\sigma,P)=0.
  \]
\end{corollary}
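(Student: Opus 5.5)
The plan is to establish the equality as two inequalities, fixing \(T\) throughout.

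For the lower bound \(\geq 0\), I would use the neutral strategy that the statement explicitly admits: \(M_0(H_t)=0\) for every history. This map is total, computable, and rational-valued, so it is an admissible \(M_\sigma\). On any \(P\in\mathcal P_{\mathrm{comp}}^{+}\) every summand \(w_t(P_{t+1}-P_t)\) vanishes, so \(G_T(M_0,P)=0\). The infimum over paths for \(M_0\) is therefore exactly \(0\), and the supremum over strategies is at least \(0\).

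For the upper bound \(\leq 0\), I would take an arbitrary total computable \(M_\sigma\) and invoke Theorem~\ref{thm:counterpath}. It supplies a path \(P^{M_\sigma}\in\mathcal P_{\mathrm{comp}}^{+}\) with \(G_T(M_\sigma,P^{M_\sigma})\leq 0\) at every finite horizon, in particular at the fixed \(T\). Hence
\[
\inf_{P\in\mathcal P_{\mathrm{comp}}^{+}}G_T(M_\sigma,P)\leq G_T(M_\sigma,P^{M_\sigma})\leq0 .
\]
Taking the supremum over \(M_\sigma\) preserves the bound, and combining with the first step gives equality.

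The main point needing care is whether the supremum is attained and whether the argument depends on the horizon. Attainment is handled by the neutral strategy. Horizon independence follows because Theorem~\ref{thm:counterpath} yields one fixed path that works simultaneously for all \(T\), so no \(T\)-dependent construction is needed.

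I would also note that any trader that ever takes a nonzero position before \(T\) has a strictly negative worst case by the second clause of the theorem. So the value \(0\) is attained only by strategies that remain flat on the relevant countermarket prefix. Without the neutral strategy the supremum may fail to be attained, or may be negative, which explains the hypothesis in the statement.
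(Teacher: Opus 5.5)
Your proposal is correct and follows essentially the paper's own proof. The neutral strategy gives the lower bound \(0\), and Theorem~\ref{thm:counterpath} gives every total computable trader a countermarket with \(G_T\leq0\), which yields the upper bound.
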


\begin{proof}
The neutral strategy guarantees zero, while
Theorem~\ref{thm:counterpath} gives every strategy a computable countermarket
with gain at most zero.
\end{proof}

\begin{proposition}[Private Randomization Does Not Create a Guarantee]
  \label{prop:randomized}
  Fix a finite horizon \(T\), a rational \(P_0>0\), and a rational
  \(\epsilon\in(0,1)\). There is a single passive computable market law,
  independent of the trader, under which every predictable randomized strategy
  with integrable gains has expected gain zero. Specifically, let
  \[
    P_{t+1}=P_t(1+\epsilon\xi_{t+1}),
    \qquad t=0,\ldots,T-1,
  \]
  where \((\xi_t)_{t\geq1}\) are independent fair signs, also independent of
  the trader's private random seed. If \(w_t\) is measurable with respect to
  the information available when the position is chosen and
  \(w_t(P_{t+1}-P_t)\) is integrable, then
  \[
    G_n=\sum_{t=0}^{n-1}w_t(P_{t+1}-P_t),
    \qquad n=0,\ldots,T,
  \]
  is a martingale. In particular, \(\mathbb E[G_T]=0\).
\end{proposition}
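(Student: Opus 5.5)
The plan is to set up the filtration explicitly and check the martingale property one step at a time. The probability space carries the fair signs \((\xi_t)_{t\geq1}\) and an independent private seed \(U\) (an arbitrary random element, whose law may depend on the trader). I would take \(\mathcal F_t=\sigma(U,\xi_1,\ldots,\xi_t)\), so that \(P_0,\ldots,P_t\) are \(\mathcal F_t\)-measurable. ``Measurable with respect to the information available when the position is chosen'' then means that \(w_t\) is \(\mathcal F_t\)-measurable. This covers any randomized trader whose date-\(t\) position is a measurable function of the seed and the observed history \(P_0,\ldots,P_t\).

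The key fact is that \(\xi_{t+1}\) is independent of \(\mathcal F_t\). This holds because \((U,\xi_1,\ldots,\xi_t)\) and \(\xi_{t+1}\) are independent: the signs are mutually independent and jointly independent of \(U\). I will state that joint independence as the intended reading of the hypothesis.

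The increment is
\[
G_{t+1}-G_t=w_t(P_{t+1}-P_t)=\epsilon\,w_tP_t\,\xi_{t+1},
\]
where \(w_tP_t\) is \(\mathcal F_t\)-measurable. Adaptedness of \(G_n\) is immediate, and integrability follows from the integrability hypothesis on each increment, summed over finitely many dates.

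The one delicate point is taking out what is known, since \(w_tP_t\) need not itself be integrable; only the product \(w_t(P_{t+1}-P_t)\) is assumed integrable. Because \(|\xi_{t+1}|=1\), we have \(|\epsilon w_tP_t|=|G_{t+1}-G_t|\), so \(w_tP_t\) is integrable after all. The standard property then gives
\[
\mathbb E[G_{t+1}-G_t\mid\mathcal F_t]=\epsilon w_tP_t\,\mathbb E[\xi_{t+1}\mid\mathcal F_t]=\epsilon w_tP_t\,\mathbb E[\xi_{t+1}]=0 .
\]
Alternatively, truncate to \(\{|w_tP_t|\le k\}\) and pass to the limit by dominated convergence.

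Summing these identities shows that \((G_n)_{n=0}^T\) is a martingale. Since \(G_0=0\), iterated expectations give \(\mathbb E[G_T]=0\).

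Two side remarks complete the statement. First, the law is passive and computable: prices stay positive because \(1\pm\epsilon>0\), and each prefix is a rational-valued computable function of finitely many fair bits. Second, nothing depends on the trader, since the filtration argument is uniform in the seed's law.

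The main obstacle is purely the bookkeeping of independence. One must ensure that the seed does not correlate with future signs; otherwise a trader could effectively see \(\xi_{t+1}\) and the conclusion would fail. The hypothesis in the statement is exactly what rules this out.
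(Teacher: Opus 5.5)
Your proof is correct and follows the paper's own argument. Both use the filtration generated by the seed and the price history, which equals your $\sigma(U,\xi_1,\ldots,\xi_t)$, together with the independence of $\xi_{t+1}$ from it, and then take out the known factor $\epsilon w_tP_t$; your check that $w_tP_t$ is integrable because $|\xi_{t+1}|=1$ is a small point of care that the paper leaves implicit.
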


\begin{proof}
Let \(R\) denote the trader's private seed and set
\(\mathcal G_t=\sigma(P_0,\ldots,P_t,R)\). Allowing \(\mathcal G_t\) to contain
all of \(R\) only strengthens the claim. Predictability makes \(w_t\)
\(\mathcal G_t\)-measurable, while \(\xi_{t+1}\) is independent of
\(\mathcal G_t\) and has mean zero. Hence
\[
 \mathbb E\!\left[
   w_t(P_{t+1}-P_t)\mid\mathcal G_t
 \right]
 =\epsilon P_t w_t\,
   \mathbb E[\xi_{t+1}\mid\mathcal G_t]
 =0.
\]
The integrability assumption therefore makes \((G_n)_{n=0}^T\) a martingale,
and \(\mathbb E[G_T]=\mathbb E[G_0]=0\).
\end{proof}

Private randomization prevents a deterministic countermarket based only on the
public history from mechanically opposing the unseen current draw, but it does
not yield strictly positive expected gain in every environment.

\paragraph{Optional online-game reading}

The recursion in Theorem~\ref{thm:counterpath} also defines a finite
perfect-information game: the trader announces \(w_t\), the environment chooses
the next return, and Corollary~\ref{cor:maximin} gives the trader-first maximin
value zero. This reading is optional because the fixed deterministic trader's
countermarket program already generates the same path offline. Borel
determinacy concerns infinite games with Borel payoff sets and does not by
itself identify a winner here~\cite{Martin1975}. Hidden private randomization is
a different information structure: Proposition~\ref{prop:randomized} proves
zero expected gain under one passive computable fair-sign law, not a mixed-game
value without specified strategy spaces.

\subsection{A Passive Market: Martin-L\"of Randomness}

Unlike the tailored countermarket, the path considered here is not constructed
from the trader's code. Suppose first that an effective binary coding of price
directions is Martin-L\"of random for the fair-coin measure \(\lambda\). More
generally, algorithmic randomness is always assessed with respect to an
explicitly specified computable reference measure. Here ``random'' means
algorithmically random, not merely irregular-looking or sensitive to initial
conditions. Intuitively, one first lists every
statistically exceptional pattern that can be detected by an effective
procedure and whose probability can be driven effectively to zero. A
Martin-L\"of-random sequence escapes every such test. It may still contain long
runs, apparent trends, and many dates that a trader guesses correctly; no
single finite pattern disqualifies it.

Algorithmic information theory gives a complementary description in terms of
the length of the shortest explanation. Fix a universal prefix-free machine
\(U\), and define the prefix complexity of a finite string \(s\) by
\begin{equation}
  K_U(s)=\min\{|p|:U(p)=s\}.
  \label{eq:prefixcomplexity}
\end{equation}
Changing \(U\) changes this quantity by at most an additive constant. The
Levin--Schnorr characterization states, for fair-coin measure, that an infinite
binary sequence \(x\) is Martin-L\"of random if and only if there is a constant
\(c\) such that
\begin{equation}
  K_U(x{\upharpoonright}n)\geq n-c
  \qquad\text{for every }n.
  \label{eq:levinschnorr}
\end{equation}
Thus no effective description compresses its prefixes by an unbounded number
of bits: no description is shorter than the raw \(n\)-bit prefix by more than
a fixed constant~\cite{Chaitin-1974,calude:02,Nies2009}.

This is stronger than emergent visual complexity. A fractal or chaotic
trajectory generated from a short computable rule may look elaborate while
retaining the description ``run this law from these data''; an algorithmically
random record has no uniformly effective compression of its prefixes. The
distinction is asymptotic: no finite price sample establishes Martin-L\"of
randomness, and \(K_U\) is uncomputable. Operationally, Busy Beaver prevents a
computable deadline for abandoning every shorter program still running.
Moreover, every finite word extends both to a computable nonrandom sequence and
to Martin-L\"of-random sequences. Failed compression is therefore not a
randomness certificate, and a direction-only coding omits magnitudes, timing,
costs, and other features relevant to profit.

In the fair binary market, a nonnegative capital process
\(K\colon\{0,1\}^{<\mathbb{N}}\to\mathbb{R}_{\geq0}\) is a martingale when
\begin{equation}
  K(s)=\frac{K(s0)+K(s1)}2.
  \label{eq:computablemartingale}
\end{equation}
Here \(K(s)\) denotes capital after the history \(s\), not prefix complexity.
A supermartingale replaces the equality in
Eq.~\eqref{eq:computablemartingale} by
\[
  K(s)\geq\frac{K(s0)+K(s1)}2;
\]
its expected next capital does not exceed its present capital. Lower
semicomputability means that every value can be approximated uniformly from
below by an increasing computable sequence of rationals.

\begin{theorem}[Ville's Inequality]
  \label{thm:ville}
  Let \((K_n)_{n\geq0}\) be a nonnegative supermartingale, adapted to the
  natural filtration under the fair-coin measure \(\lambda\), with \(K_0=1\).
  Then, for every \(a>0\),
  \begin{equation}
    \lambda\!\left(
      \left\{x:\sup_{n\geq0}K_n(x)\geq a\right\}
    \right)
    \leq\frac1a.
    \label{eq:ville}
  \end{equation}
\end{theorem}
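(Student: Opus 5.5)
The plan is the standard stopping-time argument: first prove the bound at a finite horizon, then let the horizon go to infinity. I treat the event \(\{\sup_n K_n\geq a\}\) as an increasing union of events that depend only on finitely many coordinates.

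\textbf{Step 1 (finite horizon).} Fix \(N\) and let \(A_N=\{x:\max_{0\leq n\leq N}K_n(x)\geq a\}\). Define the stopping time \(\tau=\min\{n\leq N: K_n\geq a\}\), with \(\tau=N\) if no such \(n\) exists. This is a bounded stopping time for the natural filtration, since whether \(\tau=n\) is decided by the first \(n\) bits. By optional stopping for supermartingales with bounded \(\tau\), \(\mathbb E[K_\tau]\leq K_0=1\). In the binary setting this can be checked directly: the stopped process \(K_{n\wedge\tau}\) is again a supermartingale, because on each cylinder where \(\tau>n\) the supermartingale inequality applies, and on cylinders where \(\tau\leq n\) the value is frozen. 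Induction on \(n\) then gives \(\mathbb E[K_{N\wedge\tau}]\leq 1\). Nonnegativity yields
\[
1\geq\mathbb E[K_\tau]\geq\mathbb E[K_\tau\mathbf 1_{A_N}]\geq a\,\lambda(A_N),
\]
because \(K_\tau\geq a\) on \(A_N\) and \(K_\tau\geq0\) elsewhere. Hence \(\lambda(A_N)\leq 1/a\).

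\textbf{Step 2 (limit).} The events \(A_N\) increase with \(N\), and their union is \(\{\sup_n K_n\geq a\}\); the supremum is attained as \(\geq a\) at some finite \(n\) exactly when \(x\) lies in some \(A_N\). Continuity of measure from below then gives
\[
\lambda\bigl(\{\sup_n K_n\geq a\}\bigr)=\lim_N\lambda(A_N)\leq 1/a.
\]
Measurability of each \(A_N\) is immediate, since it is a finite union of cylinders when \(K_n\) depends on \(x{\upharpoonright}n\).

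\textbf{Main obstacle.} The only delicate point is justifying optional stopping. Because \(\tau\leq N\) and each \(K_n\) takes finitely many values on the \(2^n\) cylinders, integrability is automatic. The inequality \(\mathbb E[K_{(n+1)\wedge\tau}\mid\mathcal F_n]\leq K_{n\wedge\tau}\) reduces to the pointwise averaging condition \(K(s)\geq (K(s0)+K(s1))/2\) on unstopped cylinders, so I would present that induction explicitly rather than cite a general theorem.
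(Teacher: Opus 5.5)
Your route is the one the paper sketches: stop at the first time the capital reaches the threshold before a finite horizon \(N\), apply optional stopping for a bounded stopping time, then let \(N\to\infty\). Step~1 is correct as written. The stopped process is a supermartingale because \(K_{(n+1)\wedge\tau}-K_{n\wedge\tau}=\mathbf 1\{\tau>n\}(K_{n+1}-K_n)\) with \(\{\tau>n\}\in\mathcal F_n\), so \(\lambda(A_N)\leq 1/a\).

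The gap is in Step~2. The union \(\bigcup_N A_N\) equals \(\{x:\exists n,\ K_n(x)\geq a\}\). This is in general strictly smaller than \(\{x:\sup_n K_n(x)\geq a\}\), because the supremum can equal \(a\) without being attained. The difference is not a null set.

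To see this, choose block lengths \(k_j\) with \(\sum_j 2^{-k_j}<\infty\). On the \(j\)th block of \(k_j\) bits, let the capital drop to \(0\) if the block is all zeros, and otherwise multiply it by \(2^{k_j}/(2^{k_j}-1)\). Fill in intermediate dates by conditional expectation.

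This gives a nonnegative fair-coin martingale with \(K_0=1\). All its values stay strictly below \(a=\prod_j(1-2^{-k_j})^{-1}\). On the survival set, which has measure \(1/a\), we nevertheless have \(\sup_nK_n=a\). Here \(\lambda(\bigcup_NA_N)=0\) while \(\lambda(\sup_nK_n\geq a)=1/a\). So your set identity fails in measure, and the non-attained part can carry the entire mass \(1/a\).

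The repair is one line. For \(0<b<a\), we have \(\{\sup_nK_n\geq a\}\subseteq\{\exists n:\ K_n\geq b\}\). Running Step~1 with threshold \(b\) and using continuity from below gives \(\lambda(\sup_nK_n\geq a)\leq 1/b\). Letting \(b\uparrow a\) yields the bound \(1/a\).
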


Ville's inequality follows by stopping the nonnegative capital process when it
first crosses \(a\) and applying optional stopping, followed by a
limit~\cite{Ville1939}. In particular, the set on which a fixed nonnegative
supermartingale becomes unbounded has probability zero.

To state the effective refinement correctly, let \(\lambda\) denote fair-coin
measure on Cantor space \(\{0,1\}^{\mathbb N}\). A Martin-L\"of test is a
uniformly computably enumerable sequence of open sets
\((U_k)_{k\geq1}\) satisfying \(\lambda(U_k)\leq2^{-k}\). A sequence is
\emph{Martin-L\"of random} if it avoids \(\bigcap_k U_k\) for every such
test~\cite{MartinLof1966}. A sequence is \emph{computably random} if no total
computable nonnegative martingale succeeds on it by becoming unbounded. A
\emph{Schnorr-random} sequence passes every Martin-L\"of test whose component
measures are uniformly computable. These are distinct notions, with strict
implications
\begin{equation}
  \begin{aligned}
    \text{Martin-L\"of random}
      &\ \Longrightarrow\ \text{computably random}\\
      &\ \Longrightarrow\ \text{Schnorr random}.
  \end{aligned}
  \label{eq:randomhierarchy}
\end{equation}
Neither converse holds~\cite{schnorr1,Nies2009,DH}.

The distinction between these notions also matters for the trader--market
interpretation. A total computable martingale is an executable idealized
betting rule: on each finite history it returns its capital and wager in finite
time. A lower-semicomputable supermartingale is a broader constructive test
whose values can be approximated from below; it need not itself provide a
deadline-bound trading action. The exact characterization of Martin-L\"of
randomness uses this broader class. Nevertheless, because Martin-L\"of
randomness implies computable randomness, it also blocks unbounded capital for
every total computable nonnegative fair-game martingale.

\begin{theorem}[Effective Martingale Barrier]
  \label{thm:effective}
  A binary sequence \(x\) is Martin-L\"of random if and only if no
  lower-semicomputable nonnegative supermartingale succeeds on \(x\) by
  unbounded capital. Consequently, the success set of each such betting
  strategy is effectively \(\lambda\)-null, and the set of paths on which all
  of them remain bounded has fair-coin measure one.
\end{theorem}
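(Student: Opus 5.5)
The plan is to prove the two directions of Schnorr's martingale characterization separately, relative to the fair-coin measure \(\lambda\), and then read off the measure-one consequence from Ville's inequality (Theorem~\ref{thm:ville}) together with countability.

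\emph{Success implies nonrandomness.} Suppose a lower-semicomputable nonnegative supermartingale \(K\) succeeds on \(x\). Rescaling by a rational constant, I assume \(K(\varnothing)\leq1\). For each \(k\) I set
\[
U_k=\{y:\exists n\ K(y{\upharpoonright}n)>2^k\}.
\]
Because \(K\) is lower semicomputable, the strings \(s\) with \(K(s)>2^k\) can be enumerated uniformly in \(k\). Hence \((U_k)\) is a uniformly c.e.\ sequence of open sets. Ville's inequality, in the form valid for supermartingales indexed by strings, gives \(\lambda(U_k)\leq2^{-k}\). So \((U_k)\) is a Martin-L\"of test, and \(x\in\bigcap_kU_k\). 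Thus \(x\) is not Martin-L\"of random. The same test shows that the success set of \(K\) is effectively \(\lambda\)-null.

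\emph{Nonrandomness implies success.} Suppose \(x\in\bigcap_kU_k\) for some test \((U_k)\). I will pass to the tail \(k\geq1\) and, by refining the enumeration, write each \(U_k\) as a c.e.\ prefix-free set of basic cylinders \([\sigma]\). For each such \(\sigma\) I define the elementary martingale \(d_\sigma(s)=\lambda([\sigma]\mid[s])\); it is computable and bounded by one, and it equals one on extensions of \(\sigma\). Then I put
\[
d_k=\sum_{[\sigma]\subseteq U_k}d_\sigma,\qquad K=\sum_{k\geq1}d_k .
\]
At the root, \(d_k(\varnothing)=\lambda(U_k)\leq2^{-k}\), so \(K(\varnothing)\leq1\). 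The function \(K\) is a martingale, since countable sums of nonnegative martingales with finite root value preserve the averaging identity. It is lower semicomputable, because the enumeration of cylinders adds nonnegative terms. If \(x\in U_k\), some prefix of \(x\) lies in a cylinder of \(U_k\), so eventually \(d_k(x{\upharpoonright}n)\geq1\). Since \(x\) lies in every \(U_k\), for each \(N\) we have \(K(x{\upharpoonright}n)\geq N\) for all large \(n\). Hence \(K\) succeeds on \(x\).

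\emph{Measure one.} There are only countably many lower-semicomputable supermartingales. Each success set is null by the first direction, so the union of the success sets is null. Its complement, where every such strategy stays bounded, therefore has measure one. Equivalently, that complement is the set of Martin-L\"of random sequences, and there are countably many tests, each with intersection of measure zero.

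The main obstacle is the second direction. There I must check carefully that the infinite sum remains lower semicomputable and satisfies the martingale identity at every string, not just at the root. The prefix-free normalization of \(U_k\) is needed so that overlapping cylinders do not inflate \(d_k(\varnothing)\) above \(\lambda(U_k)\).
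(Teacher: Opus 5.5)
Your proof is correct and follows the paper's route. The forward direction uses exactly the paper's test $U_k=\{y:\exists n\ K(y{\upharpoonright}n)>2^k\}$ with Ville's inequality. For the converse, you spell out the standard sum of elementary martingales $\lambda([\sigma]\mid[s])$ over prefix-free generators, which the paper only cites from Calude and Nies, and you make the measure-one consequence explicit by countability.

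One small notational fix: the index set in $d_k$ should be the prefix-free generating set of $U_k$, not all $\sigma$ with $[\sigma]\subseteq U_k$. The latter would make $d_k(\varnothing)$ infinite, and your surrounding text already intends the former.
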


\begin{proof}
Normalize a lower-semicomputable nonnegative supermartingale \(d\) by
\(d(\varnothing)\leq1\), and let
\[
  U_k=\{x:\exists n,\ d(x{\upharpoonright}n)>2^k\}.
\]
Lower semicomputability makes \(U_k\) computably enumerable and open, while
Ville's inequality gives \(\lambda(U_k)\leq2^{-k}\). Hence
\((U_k)\) is a Martin-L\"of test and no Martin-L\"of-random sequence permits
\(d\) to become unbounded. Conversely, every Martin-L\"of test can be converted
into a lower-semicomputable supermartingale that succeeds on its intersection;
this is the martingale characterization of Martin-L\"of
randomness~\cite{calude:02,Nies2009}.
\end{proof}

The same construction extends from \(\lambda\) to any computable probability
measure \(\mu\). In cylinder notation the supermartingale condition becomes
\[
 d(s)\mu([s])\geq
 d(s0)\mu([s0])+d(s1)\mu([s1]).
\]

\begin{remark}[Computability Versus an ELMM]
A computable reference measure \(\mu\) supplies effective cylinder
probabilities; computability alone is not a financial no-arbitrage condition.
For \(\mu\) to be an equivalent local martingale measure, it must be defined on
the same filtered model as the physical measure \(\mathbb P\), satisfy
\(\mu\sim\mathbb P\), and make the discounted traded-price process a local
\(\mu\)-martingale. None of these properties follows from computability, and
conversely an ELMM need not be computable. Likewise, an abstract
\(\mu\)-test supermartingale is not automatically the gain process of an
admissible self-financing portfolio; that interpretation requires traded
payoffs and financing constraints that realize the corresponding bets.
\end{remark}

Theorem~\ref{thm:effective} is asymptotic and reference-measure relative: it
blocks unbounded test capital on a measure-one class, not every finite-horizon
loss, and it does not extend unchanged to unrestricted borrowing or a
non-martingale physical law.

\begin{corollary}[No Sustained Computable Prediction Edge]
  \label{cor:prediction-edge}
  Let \(x=(x_t)_{t\geq1}\in\{0,1\}^{\mathbb N}\) be Martin-L\"of random for
  the fair-coin measure, and let
  \(F\colon\{0,1\}^{<\mathbb N}\to\{0,1\}\) be a total computable next-bit
  predictor. Define
  \[
    \widehat x_t=F(x_1\cdots x_{t-1}),
    \qquad
    A_T=\frac1T\sum_{t=1}^T
      \mathbf 1\{\widehat x_t=x_t\}.
  \]
  Then
  \[
    \liminf_{T\to\infty}A_T\leq\frac12.
  \]
  Thus no total computable predictor has a fixed positive asymptotic accuracy
  advantage on a fair-coin Martin-L\"of-random path.
\end{corollary}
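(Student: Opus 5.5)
The plan is to argue by contraposition: from a computable predictor with a sustained accuracy edge we build a total computable nonnegative martingale that becomes unbounded on \(x\), which Theorem~\ref{thm:effective} forbids.

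Suppose \(\liminf_{T\to\infty}A_T>\tfrac12\). Then there are a rational \(\delta>0\) and a date \(T_0\) with \(A_T\geq\tfrac12+\delta\) for all \(T\geq T_0\).

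\textbf{Martingale.} Fix a rational stake fraction \(\gamma\in(0,1)\), to be chosen depending on \(\delta\). Let \(K(\varnothing)=1\). At each history \(s\) of length \(t-1\), compute \(\widehat x=F(s)\) and bet the fraction \(\gamma\) of current capital on \(\widehat x\):
\[
  K(s\widehat x)=(1+\gamma)K(s),
  \qquad
  K(s(1-\widehat x))=(1-\gamma)K(s).
\]
This satisfies Eq.~\eqref{eq:computablemartingale}. Because \(F\) is total computable and \(\gamma\) is rational, \(K\) is rational-valued, total computable, and nonnegative. In particular it is a lower-semicomputable nonnegative supermartingale.

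\textbf{Growth along \(x\).} After \(T\) bits with \(C_T=TA_T\) correct predictions,
\[
  \log K(x{\upharpoonright}T)=T\bigl[A_T\log(1+\gamma)+(1-A_T)\log(1-\gamma)\bigr].
\]
The bracket is increasing in \(A_T\), since \(\log(1+\gamma)>\log(1-\gamma)\). So for \(T\geq T_0\) it is at least
\[
  g(\gamma)=\bigl(\tfrac12+\delta\bigr)\log(1+\gamma)+\bigl(\tfrac12-\delta\bigr)\log(1-\gamma).
\]
Here \(g(0)=0\) and \(g'(0)=2\delta>0\), so some small rational \(\gamma>0\) (for instance one near \(2\delta\), checked by a second-order Taylor bound) gives \(g(\gamma)>0\). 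Then \(\log K(x{\upharpoonright}T)\geq Tg(\gamma)\to\infty\), so \(K\) succeeds on \(x\). This contradicts Theorem~\ref{thm:effective}, since \(x\) is Martin-L\"of random. Equivalently, it contradicts computable randomness, by the hierarchy \eqref{eq:randomhierarchy}.

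\textbf{Main obstacle.} The only delicate point is picking \(\gamma\) uniformly. The value of \(\delta\) is not known effectively, but this does not matter: \(\delta\) is a fixed rational, so \(K\) is one fixed computable martingale for that hard-coded \(\gamma\). Nothing needs to be computed from \(x\). With this in hand the remaining checks are routine: the martingale identity, nonnegativity (which holds because \(\gamma<1\)), and monotonicity of the bracket in \(A_T\).
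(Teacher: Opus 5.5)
Your proposal is correct and follows the paper's proof essentially step for step. Both arguments bet a fixed rational fraction of capital on $F$'s predicted bit, which gives a total computable nonnegative fair-coin martingale whose logarithm along $x$ is eventually at least $T\,g(\gamma)$ with $g(\gamma)>0$, contradicting Theorem~\ref{thm:effective}. The only cosmetic difference is in choosing the stake: the paper uses $g'(\gamma)=(2\delta-\gamma)/(1-\gamma^2)>0$ to show that every rational $\gamma\in(0,\min\{1,2\delta\})$ works, whereas you argue from $g'(0)=2\delta>0$ that some sufficiently small rational $\gamma$ does.
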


\begin{proof}
Suppose instead that \(\liminf_{T\to\infty}A_T>1/2\). Choose
\(\eta>0\) and \(T_0\) such that
\[
  A_T\geq\frac12+\eta
  \qquad\text{for all }T\geq T_0,
\]
and choose a rational \(\delta\) satisfying
\(0<\delta<\min\{1,2\eta\}\). Starting from \(K_0=1\), bet the fraction
\(\delta\) of current capital on the bit predicted by \(F\). Thus
\[
  K_t=
  \begin{cases}
    K_{t-1}(1+\delta),&\widehat x_t=x_t,\\
    K_{t-1}(1-\delta),&\widehat x_t\neq x_t.
  \end{cases}
\]
Exactly one of the two possible next bits agrees with the prediction, so this
recursion defines a total computable nonnegative fair-coin martingale. Along
\(x\),
\begin{align*}
  \frac1T\log K_T
  &=A_T\log(1+\delta)+(1-A_T)\log(1-\delta)\\
  &\geq g_\eta(\delta),\\
  g_\eta(\delta)
  &=\left(\frac12+\eta\right)\log(1+\delta)\\
  &\quad+\left(\frac12-\eta\right)\log(1-\delta)
\end{align*}
for every \(T\geq T_0\). Now \(g_\eta(0)=0\) and
\[
  g_\eta'(\delta)=\frac{2\eta-\delta}{1-\delta^2}>0
  \qquad\text{when }0<\delta<\min\{1,2\eta\}.
\]
Hence \(g_\eta(\delta)>0\), and \(K_T\) grows exponentially along \(x\),
contradicting Theorem~\ref{thm:effective}. Therefore the asserted sustained
advantage is impossible.
\end{proof}

This corollary concerns binary prediction under the fair-coin reference law.
It does not by itself imply trading profit or loss when returns have unequal
magnitudes, financing constraints, or transaction costs; finite streaks and
isolated correct forecasts remain possible.

A total deterministic computable generator cannot itself output a fair-coin
Martin-L\"of-random sequence. The countermarket is correspondingly computable
and tailored to one trader, whereas a random record is passive and typical
relative to its stated measure. A hybrid can apply computable dynamics to
genuinely random input, but the randomness then comes from that input.
Game-theoretic probability develops the associated pathwise betting viewpoint
systematically~\cite{ShaferVovk2001}.

\subsection{Ramsey Regularity and the Finite-Pattern Trap}
\label{sec:ramseytrap}

A large market record can be pattern-rich even when it contains no exploitable
law. Ramsey theory makes this finite combinatorial point without assuming a
probability model. For fixed positive integers \(c,m,q\), there is a finite
number \(R_q(m;c)\) such that every \(c\)-coloring of the \(q\)-element subsets
of any set of size at least \(R_q(m;c)\) has an \(m\)-element subset whose
\(q\)-subsets all receive the same color. Thus every red--blue coloring of the
pairs among six objects contains a monochromatic triangle. Van der Waerden's
theorem similarly forces a monochromatic arithmetic progression of any
prescribed finite length in every sufficiently long finite coloring
\cite{Ramsey-GRS-90,GS-90}. These thresholds may be enormous, but they are
finite and computable, unlike a Busy-Beaver horizon.

The guarantee depends on the coding and is weaker than a trading signal. In a
discretized return record, equal symbols at equally spaced dates need not be
consecutive, form a price trend, have a favorable sign, or persist beyond the
sample. Ramsey existence therefore asserts neither statistical dependence nor
causation, predictability, or profit. Nor does it conflict with algorithmic
randomness. A fair-coin Martin-L\"of-random sequence is Borel normal, so every
finite binary word, including arbitrarily long runs and finite chart motifs,
occurs with its fair-coin limiting frequency. Randomness here means escaping
every effective null test and lacking a global effective compression or
successful test-capital process, not local patternlessness
\cite{calude:02,Nies2009}.

Calude and Longo call regularities forced by data volume \emph{Ramsey-type
correlations}. Their incompressibility argument emphasizes that almost all
sufficiently long finite strings resist compression by any fixed positive
fraction while still containing the finite regularities forced by Ramsey
theory~\cite{Calude2016}. This is an epistemic warning, not a theorem that every
sample correlation is false: a Ramsey configuration is not itself a
correlation coefficient, and a pattern observed in one finite history is
compatible with both random and computable continuations.

Trading research compounds this pattern abundance with selection. If all
\(M\) null hypotheses are true, their rejection events are independent, and
each has rejection probability exactly \(\alpha\), the probability of at least
one false rejection is
\(1-(1-\alpha)^M\); if their probabilities are at most \(\alpha\), this is an
upper bound. Without independence, the union bound gives
\(\min\{1,M\alpha\}\) as an upper bound. This
probabilistic multiple-testing problem is distinct from Ramsey's deterministic
existence theorem, but its practical lesson is aligned: a trend or cross-market
relation retained after searching many assets, windows, transformations, lags,
and parameters is a candidate hypothesis, not yet an edge. A defensible FAPP
claim must report the search universe and survive selection correction, costs,
a genuinely untouched holdout or independent replication, and testing outside
the selected regime~\cite{BaileyLopez2014}.

\subsection{What Rice's Theorem Does and Does Not Say}

Rice's theorem makes every nontrivial extensional property of arbitrary partial
computable functions undecidable~\cite{Rice-1953}; in particular, totality is
undecidable in general~\cite{turing-36}. It does not make a fixed total
strategy's gain on a completed finite computable path undecidable, because that
program can simply be simulated. Theorems~\ref{thm:profit-undecidable} and
\ref{thm:market-undecidable} instead give the required promise-domain results
by explicit Halting reductions: the unbounded questions of eventual profit and
future-event occurrence are undecidable even though every code produced by the
reductions is total.

\section{The No-Arbitrage Foundation}
\label{sec:noarb}

Here the market is no longer an opponent choosing a response after seeing the
trader's action. Instead, financial admissibility and no-arbitrage delimit
which trader--market pairs are coherent.

\subsection{Discounted Gains and Admissibility}

Let the market be defined on a filtered probability space
\((\Omega,\mathcal{F},\{\mathcal{F}_t\}_{t\in[0,T]},\mathbb{P})\).
Let \(B_t>0\) be a numeraire and \(S_t\) the vector of traded asset prices.
Write \(\overline S_t=S_t/B_t\) for discounted prices. A predictable,
\(\overline S\)-integrable strategy \(H\) has discounted self-financing gain
\begin{equation}
  G_t=(H\mathbin{\cdot}\overline S)_t,\qquad G_0=0.
  \label{eq:gain}
\end{equation}
It is \emph{admissible} if there is a finite constant \(a\) such that
\begin{equation}
  G_t\geq -a\quad\text{for all }t\in[0,T],\quad\mathbb{P}\text{-a.s.}
  \label{eq:admissible}
\end{equation}
This uniform lower bound rules out doubling systems financed by unbounded
credit.

\begin{theorem}[Universality Implies Classical Arbitrage]
  \label{thm:arbitrage}
  If an admissible self-financing strategy \(H\) with zero initial capital
  satisfies
  \[
    G_T\geq0\quad\mathbb{P}\text{-a.s.},
    \qquad
    \mathbb{P}(G_T>0)>0,
  \]
  then \(H\) is a classical arbitrage. In particular, a pointwise universal
  strategy is a classical arbitrage whenever its gain is defined on an
  admissible path class carrying \(\mathbb{P}\).
\end{theorem}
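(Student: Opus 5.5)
The first assertion only requires matching the hypotheses against the definition of a classical arbitrage, so I will begin by fixing that definition in the present setting. A classical arbitrage is an admissible, predictable, \(\overline S\)-integrable self-financing strategy \(H\) with zero initial capital whose discounted terminal gain \(G_T=(H\mathbin{\cdot}\overline S)_T\) satisfies \(G_T\geq0\) \(\mathbb P\)-a.s. and \(\mathbb P(G_T>0)>0\). Zero initial capital is built into Eq.~\eqref{eq:gain} through \(G_0=0\), and admissibility is Eq.~\eqref{eq:admissible}. Once this is written out, the first claim holds by definition. I will state explicitly that the gain is discounted by \(B\), since discounting is what makes ``positive gain'' numeraire-consistent.

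The second assertion carries the actual content, and the main step is to make ``pointwise universal'' precise. Following Eq.~\eqref{eq:universalquantifiers} and the definition in Sec.~\ref{sec:intro}, I will take it to mean the following. There is a set \(\mathcal A\) of admissible trajectories such that \(G_T(\omega)>0\) for every \(\omega\) whose discounted price path \(\overline S(\omega)\) lies in \(\mathcal A\). The phrase ``carrying \(\mathbb P\)'' will be read as \(\mathbb P(\{\omega:\overline S(\omega)\in\mathcal A\})=1\), with that event measurable or at least containing a measurable set of full measure. Under these readings, pointwise positivity on the carrier yields \(G_T>0\) \(\mathbb P\)-a.s. This gives both \(G_T\geq0\) a.s. and \(\mathbb P(G_T>0)=1>0\), and the first part then applies.

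I expect the main obstacle to be the identification between the pathwise gain and the stochastic integral. A pathwise gain is defined for each trajectory, for example as a Riemann--Stieltjes or discrete sum as in Theorem~\ref{thm:counterpath}. The gain in Eq.~\eqref{eq:gain} is instead a stochastic integral defined only up to \(\mathbb P\)-null sets. In discrete time, or for simple predictable \(H\), the two coincide path by path, so no difficulty arises. In continuous time I will add an explicit hypothesis that the pathwise gain agrees \(\mathbb P\)-a.s. with \((H\mathbin{\cdot}\overline S)_T\). This is the meaning I will attach to ``its gain is defined on'' the path class. I will also require that the uniform lower bound \(-a\) hold along the carrier, so that admissibility transfers a.s. Because both conditions concern \(\mathbb P\)-full sets, neither affects the a.s. conclusions.

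Finally, I will note the consequence without proving it, which points to the FTAP discussion that follows in the paper. If the model admits an ELMM \(\mathbb Q\sim\mathbb P\), then the admissible gain is a \(\mathbb Q\)-local martingale bounded below, hence a supermartingale. This forces \(\mathbb E_{\mathbb Q}[G_T]\leq0\), which together with \(G_T\geq0\) gives \(G_T=0\) \(\mathbb Q\)-a.s. and so \(\mathbb P\)-a.s. That would contradict universality.
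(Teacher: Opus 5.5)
Your proposal is correct and follows the paper's proof. The first assertion is exactly the definition of a classical arbitrage, and pointwise strict positivity on a full-measure admissible path class gives \(\mathbb P(G_T>0)=1\). Your added hypotheses make precise what the paper leaves implicit in the phrase ``its gain is defined on an admissible path class carrying \(\mathbb P\)'': that the pathwise gain agrees almost surely with \((H\mathbin{\cdot}\overline S)_T\), and that the lower bound transfers along the carrier.
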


\begin{proof}
The displayed payoff conditions, together with zero initial cost,
self-financing, and admissibility, are the standard classical-arbitrage
conditions. Pointwise strict positivity on every admissible path implies them,
with \(\mathbb{P}(G_T>0)=1\).
\end{proof}

\begin{remark}[Terminology and Strength]
The pathwise condition is sometimes called a sure or strong arbitrage, but it
is not synonymous with \emph{arbitrage of the first kind}. The latter is the
condition dual to No Unbounded Profit with Bounded Risk (NUPBR/NA1) and is a
distinct, weaker viability concept~\cite{Kardaras2012}. Classical
no-arbitrage alone already excludes the universal strategy in
Theorem~\ref{thm:arbitrage}; NFLVR is invoked below because it supplies the
standard martingale-measure formulation.
\end{remark}

\subsection{NA, NUPBR, and Benchmark-Relative Gains}

The relevant no-arbitrage notions do not form the simple chain sometimes
suggested in informal discussions. In the standard semimartingale setting,
NFLVR combines classical NA with NUPBR, whereas NA and NUPBR are in general
logically distinct. NUPBR is equivalent, under standard portfolio hypotheses,
to the existence of a numeraire portfolio whose positive relative-wealth
processes are supermartingales~\cite{KaratzasKardaras2007,Kardaras2012}.
It may hold even when the stronger NFLVR condition and an ELMM fail.

This distinction prevents two overstatements. First, Theorem~\ref{thm:arbitrage}
uses only classical NA; NUPBR alone is not silently substituted for it. If a
zero-cost gain were nonnegative at all intermediate times and positive at
terminal time, scaling would violate NUPBR, but terminal pathwise positivity
together with only a strategy-specific lower bound does not by itself supply
that scaling argument. Second, benchmark and stochastic-portfolio approaches
can generate gains \emph{relative} to a numeraire or market portfolio without
producing zero-cost positive gain on every path. Relative arbitrage in diverse
equity markets and real-world benchmark pricing therefore do not contradict
the theorem~\cite{FernholzKaratzasKardaras2005,PlatenHeath2010}.

\subsection{The Martingale Constraint}

For a locally bounded semimartingale price process, the
Delbaen--Schachermayer form of the First Fundamental Theorem states that No
Free Lunch with Vanishing Risk (NFLVR) is equivalent to the existence of an
equivalent local martingale measure (ELMM)
\(\mathbb{Q}\sim\mathbb{P}\) under which \(\overline S\) is a local
martingale~\cite{Harrison1981,Delbaen1994}. For general unbounded
semimartingales the corresponding formulation uses an equivalent
\(\sigma\)-martingale measure~\cite{DelbaenSchachermayer1998}; the corollary below
retains the locally bounded setting.

\begin{corollary}[No Universal Zero-Cost Strategy]
  \label{cor:noarb}
  If an ELMM \(\mathbb{Q}\) exists, no admissible self-financing strategy with
  zero initial capital can satisfy
  \(G_T\geq0\), \(\mathbb{P}\)-almost surely, and
  \(\mathbb{P}(G_T>0)>0\). Consequently, no pointwise universal strategy
  exists.
\end{corollary}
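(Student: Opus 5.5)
The plan is the standard supermartingale argument under the ELMM, combined with Theorem~\ref{thm:arbitrage} for the final sentence. Let \(H\) be admissible, self-financing, and of zero initial capital, with \(G=(H\mathbin{\cdot}\overline S)\) and \(G_t\geq-a\) for all \(t\in[0,T]\). Since \(\overline S\) is a local \(\mathbb Q\)-martingale and \(H\) is predictable and \(\overline S\)-integrable, the stochastic integral \(G\) is a local \(\mathbb Q\)-martingale. Here I will invoke the Ansel--Stricker lemma: a stochastic integral with respect to a local martingale that is uniformly bounded below is again a local martingale (for locally bounded \(\overline S\) this is the setting of Delbaen--Schachermayer).

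Because \(G\geq-a\), the process \(G+a\) is a nonnegative local martingale. By Fatou's lemma along a localizing sequence it is therefore a \(\mathbb Q\)-supermartingale. Hence
\[
\mathbb E_{\mathbb Q}[G_T]\leq G_0=0 .
\]

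Now suppose \(G_T\geq0\) \(\mathbb P\)-a.s.\ and \(\mathbb P(G_T>0)>0\). Equivalence \(\mathbb Q\sim\mathbb P\) transfers both statements: \(G_T\geq0\) holds \(\mathbb Q\)-a.s., and \(\mathbb Q(G_T>0)>0\). Consequently \(\mathbb E_{\mathbb Q}[G_T]>0\), which contradicts the supermartingale bound. For the final sentence, Theorem~\ref{thm:arbitrage} shows that a pointwise universal strategy on an admissible path class carrying \(\mathbb P\) satisfies exactly these payoff conditions, with \(\mathbb P(G_T>0)=1\). It is therefore excluded.

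The main obstacle is the first step: showing that \(G\) is a local martingale, and not merely a \(\sigma\)-martingale or something weaker, under \(\mathbb Q\). I would handle it by citing Ansel--Stricker directly. The admissibility lower bound is what makes that citation work; without it, doubling strategies defeat the argument. The remaining steps (Fatou's lemma and the transfer of null sets under equivalence) are routine.
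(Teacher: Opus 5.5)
Your proof is correct and matches the paper's argument: the admissible integral is a lower-bounded local $\mathbb{Q}$-martingale, hence a supermartingale with $\mathbb{E}_{\mathbb{Q}}[G_T]\leq 0$, and equivalence of measures carries the arbitrage conditions over to $\mathbb{Q}$, giving a contradiction. Citing Ansel--Stricker supplies the justification the paper's one-line proof leaves implicit; note, though, that your opening sentence overstates the point, since without the lower bound the integral is in general only a $\sigma$-martingale. Your use of Theorem~\ref{thm:arbitrage} for the final sentence is exactly the intended deduction.
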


\begin{proof}
Under \(\mathbb{Q}\), the admissible stochastic integral
\(G=H\mathbin{\cdot}\overline S\) is a local martingale bounded from below and
therefore a supermartingale. Hence
\(\mathbb{E}_{\mathbb{Q}}[G_T]\leq G_0=0\). If the displayed arbitrage
conditions held, equivalence would give \(G_T\geq0\),
\(\mathbb{Q}\)-almost surely, with
\(\mathbb{Q}(G_T>0)>0\). Consequently
\(\mathbb{E}_{\mathbb{Q}}[G_T]>0\), a contradiction.
\end{proof}

The ELMM conclusion concerns discounted gains under the pricing measure
\(\mathbb{Q}\). It does not say that every risky strategy has non-positive
expected return under the physical measure \(\mathbb{P}\); positive physical
expected returns may compensate risk.

\subsection{Win Rate Is Not Expected Value}

The probability of closing a trade at a profit can greatly exceed \(1/2\)
even in a fair market. Optional stopping gives the precise distinction.

\begin{proposition}[Asymmetric Barriers in a Fair Random Walk]
  \label{prop:optional}
  Let \(X_n\) be a simple symmetric random walk with \(X_0=0\), and for positive
  integers \(a,b\) let
  \[
    \tau=\inf\{n\geq0:X_n=a\text{ or }X_n=-b\}.
  \]
  Then
  \begin{equation}
    \mathbb{P}(X_\tau=a)=\frac{b}{a+b},
    \qquad
    \mathbb{E}[X_\tau]=0.
    \label{eq:gambler}
  \end{equation}
\end{proposition}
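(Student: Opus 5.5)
The plan is the standard optional-stopping argument for the gambler's ruin. First I establish that \(\tau\) is almost surely finite and integrable. On any block of \(a+b\) consecutive steps that are all \(+1\), a walk started anywhere in \((-b,a)\) must leave the strip. These blocks are disjoint and independent, so \(\mathbb{P}(\tau>k(a+b))\leq(1-2^{-(a+b)})^k\). Hence \(\tau<\infty\) almost surely, and in fact \(\tau\) has exponential tails and \(\mathbb{E}[\tau]<\infty\).

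Next I apply optional stopping to the martingale \(X_n\). The stopped process \(X_{n\wedge\tau}\) is a martingale bounded by \(\max\{a,b\}\) in absolute value. This gives \(\mathbb{E}[X_{n\wedge\tau}]=0\) for every \(n\). Since \(X_{n\wedge\tau}\to X_\tau\) almost surely, dominated convergence yields \(\mathbb{E}[X_\tau]=0\), which is the second identity in Eq.~\eqref{eq:gambler}.

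For the hitting probability, I note that finiteness of \(\tau\) means \(X_\tau\) takes only the two values \(a\) and \(-b\). Writing \(p=\mathbb{P}(X_\tau=a)\), the zero-mean identity becomes \(pa-(1-p)b=0\). Solving gives \(p=b/(a+b)\).

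I expect the only real obstacle to be justifying the passage to the limit, that is, showing \(\tau<\infty\) almost surely so that boundedness of the stopped process legitimately applies. The block argument above handles this. Alternatively, one could invoke recurrence of the simple symmetric random walk, but the block argument is self-contained. I would close with a remark: with \(b\gg a\), the win rate \(b/(a+b)\) is close to one, yet the expected gain remains zero.
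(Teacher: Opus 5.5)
Your proposal is correct and follows the paper's proof essentially step for step: optional stopping for the bounded stopped walk \(X_{n\wedge\tau}\), bounded (dominated) convergence to get \(\mathbb{E}[X_\tau]=0\), and then solving \(ap-b(1-p)=0\) for \(p\). Your block argument giving \(\mathbb{P}(\tau>k(a+b))\leq(1-2^{-(a+b)})^k\) supplies the almost-sure finiteness of \(\tau\), which the paper only asserts.
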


\begin{proof}
For this finite interval, \(\tau<\infty\) almost surely and
\(|X_{n\wedge\tau}|\leq\max\{a,b\}\). Optional stopping gives
\(\mathbb E[X_{n\wedge\tau}]=0\), and bounded convergence yields
\(\mathbb E[X_\tau]=0\)~\cite{Doob1953}. If
\(p=\mathbb{P}(X_\tau=a)\), then
\(0=\mathbb{E}[X_\tau]=ap-b(1-p)\), which gives
\(p=b/(a+b)\).
\end{proof}

With a take-profit of \(a=1\) and a stop-loss of \(b=9\), the win probability
is \(0.9\), but the expected payoff is zero: nine frequent unit gains are
offset by one nine-unit loss in expectation. Thus prediction accuracy, trade
win rate, and expected discounted profit are three different quantities.

\section{The Combinatorial Perspective}
\label{sec:nfl}

Here the opposing side is an ensemble of market records rather than an
adaptive market. The optimization NFL theorem and the elementary prediction
result needed for market-direction claims should be distinguished.

\begin{theorem}[Optimization NFL, Wolpert--Macready]
  \label{thm:nfl}
  Let \(X\) and \(Y\) be finite and give the function class \(Y^X\) its uniform
  law. For any two non-revisiting black-box algorithms and any fixed number
  \(q\leq |X|\) of queries, the resulting \(q\)-term sequences of observed
  values have the same distribution. Consequently, every performance
  functional depending only on that observed value sequence has the same
  average for both algorithms
  \cite{Wolpert:1997:NFL:2221336.2221408}.
\end{theorem}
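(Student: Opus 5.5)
The plan is to prove the Wolpert--Macready statement by induction on the number of queries. I would show that, for each fixed non-revisiting deterministic algorithm \(\mathcal A\), the observed value sequence \((y_1,\ldots,y_q)\) is uniformly distributed on \(Y^q\) when \(f\) is drawn uniformly from \(Y^X\). Since this law does not depend on \(\mathcal A\), any two algorithms induce the same distribution. Averages of any performance functional \(\Phi(y_1,\ldots,y_q)\) are then equal, because each is \(\sum_{\vec y\in Y^q}|Y|^{-q}\Phi(\vec y)\). Randomized algorithms can be added by conditioning on the algorithm's seed, which is independent of \(f\), and then averaging.

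First I would formalize the algorithm. A deterministic non-revisiting black-box algorithm is a family of maps \(a_k\) sending each history \(((x_1,y_1),\ldots,(x_{k-1},y_{k-1}))\) to a next point \(x_k\notin\{x_1,\ldots,x_{k-1}\}\). Running \(\mathcal A\) on \(f\) produces \(x_k=a_k(\text{history})\) and \(y_k=f(x_k)\). Because \(q\le|X|\), a fresh point is always available.

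For the inductive step I would fix \(k<q\) and a value prefix \((y_1,\ldots,y_k)\). The prefix determines the queried points \(x_1,\ldots,x_k\) and hence the next point \(x_{k+1}\), by induction through the deterministic maps \(a_j\). The event that the first \(k\) observed values equal \((y_1,\ldots,y_k)\) coincides with the event \(\{f(x_j)=y_j,\ j\le k\}\) for these determined points. Under the uniform law, the values \(f(x)\) at distinct points are i.i.d.\ uniform on \(Y\). So, conditional on this event, \(f(x_{k+1})\) is uniform on \(Y\) and independent of the prefix, since \(x_{k+1}\) is distinct from the earlier points. This gives \(\Pr(y_1,\ldots,y_q)=|Y|^{-q}\).

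The main obstacle is the subtle point that \(x_{k+1}\) is itself random. It is handled by noting that it becomes fixed once the value prefix is fixed. Non-revisiting is exactly what guarantees that the new value is a fresh independent coordinate; this is where the hypothesis is used. Without it, a repeated query would reproduce an old value and the uniform law would fail.
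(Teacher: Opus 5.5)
Your proof is correct. The paper gives no proof of its own and only cites Wolpert--Macready, whose appendix proof is the same induction on the number of queries: it splits \(f\) into its values at the already-queried points and at the fresh point fixed by the history, and shows that \(\sum_{f}P(d^y_m\mid f,m,a)\) does not depend on the algorithm \(a\) (it equals \(|Y|^{|X|-m}\), which is your uniform law on \(Y^m\) before normalization). The only difference is presentation: you state their count over functions as conditional uniformity of the next observed value, which is equivalent under the uniform law on \(Y^X\).
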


The theorem concerns oracle-based optimization over function spaces. It does
not, by itself, imply a statement about self-financing gains on stochastic
price paths. A sharpening makes its structural assumption explicit: for a
uniform distribution on a finite function class, the NFL equality for all
non-revisiting algorithms holds precisely when the class is closed under
permutations of the search domain~\cite{SchumacherVoseWhitley2001}. Financial
path classes are generally not permutation-closed: temporal order, continuity,
financing, and no-arbitrage restrictions distinguish one permutation from
another. Thus a financial edge, when present, resides in exactly such
symmetry-breaking structure and the inductive bias used to model it.

For binary directional prediction the relevant result is simpler and can be
derived directly.

\begin{proposition}[Uniform Binary Prediction]
  \label{prop:binary}
  Let \(X_1,\ldots,X_T\) be uniformly distributed on \(\{0,1\}^T\). A causal
  deterministic predictor chooses
  \(\widehat X_t=f_t(X_1,\ldots,X_{t-1})\). Its mean accuracy
  \[
    A_T=\frac1T\sum_{t=1}^T
      \mathbf{1}\{\widehat X_t=X_t\}
  \]
  satisfies \(\mathbb{E}[A_T]=1/2\). The same holds for a randomized predictor
  driven by a private random seed independent of
  \((X_1,\ldots,X_T)\).
\end{proposition}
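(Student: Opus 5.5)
By linearity of expectation it suffices to show that each term satisfies \(\mathbb P(\widehat X_t=X_t)=1/2\) for every \(t\in\{1,\ldots,T\}\); averaging over \(t\) then gives \(\mathbb E[A_T]=1/2\).

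\textbf{Deterministic predictor.} I would condition on the past \(\mathcal H_{t-1}=\sigma(X_1,\ldots,X_{t-1})\). The uniform law on \(\{0,1\}^T\) makes the coordinates independent fair bits. So \(X_t\) is independent of \(\mathcal H_{t-1}\) and satisfies \(\mathbb P(X_t=1)=1/2\). The prediction \(\widehat X_t=f_t(X_1,\ldots,X_{t-1})\) is \(\mathcal H_{t-1}\)-measurable. Hence
\[
\mathbb P(\widehat X_t=X_t\mid\mathcal H_{t-1})
=\mathbf 1\{\widehat X_t=1\}\tfrac12+\mathbf 1\{\widehat X_t=0\}\tfrac12=\tfrac12 .
\]
Taking expectations gives the claim for each \(t\). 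For \(t=1\) the history is empty, and the same computation applies with a constant prediction.

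The argument can also be made purely combinatorial, in the spirit of Theorem~\ref{thm:nfl}. Pair each word with the word obtained by flipping coordinate \(t\). This involution preserves the uniform law and leaves \(\widehat X_t\) unchanged, because \(\widehat X_t\) does not depend on \(X_t\). Exactly one word of each pair is predicted correctly at date \(t\).

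\textbf{Randomized predictor.} Let \(R\) be the private seed, independent of \((X_1,\ldots,X_T)\). Then \(\widehat X_t=f_t(X_1,\ldots,X_{t-1},R)\). I would condition on \(\mathcal G_{t-1}=\sigma(X_1,\ldots,X_{t-1},R)\), which mirrors the proof of Proposition~\ref{prop:randomized}. The only point needing care is that \(X_t\) is still independent of \(\mathcal G_{t-1}\) and uniform. This follows because \(R\) is independent of the whole vector \(X\), and the coordinates of \(X\) are mutually independent. Together these give independence of \(X_t\) from \((X_1,\ldots,X_{t-1},R)\). The deterministic computation then goes through verbatim.

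Alternatively, condition on \(R=r\) and apply the deterministic case to the predictor \(f_t(\cdot,r)\). This uses independence of \(R\) and \(X\) so that, given \(R=r\), the vector \(X\) is still uniform. Then integrate over \(r\).

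That independence check is the main, if minor, obstacle. Measurability of \(f_t\) in the seed would be assumed, so that the conditional expectations are defined.
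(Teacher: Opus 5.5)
Your proof is correct and follows the paper's argument. You condition on the history, and in the randomized case additionally on the independent seed; you observe that $X_t$ remains a fair bit, so each date is predicted correctly with probability $1/2$; and you conclude by linearity of expectation. Your explicit check that $X_t$ is independent of $(X_1,\ldots,X_{t-1},R)$, and your coordinate-flip involution, add welcome detail but do not constitute a different route.
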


\begin{proof}
Conditional on every history \(X_1,\ldots,X_{t-1}\), the next bit \(X_t\) is
an independent fair bit. Therefore
\(\mathbb{P}(\widehat X_t=X_t\mid X_1,\ldots,X_{t-1})=1/2\).
Linearity of expectation gives the result. In the randomized case, conditioning
additionally on the independent private seed leaves \(X_t\) fair given the
history and proves the same equality.
\end{proof}

Consequently, no predictor can achieve expected directional accuracy greater
than \(1/2\) under the uniform distribution over all binary sequences; in
particular, none can exceed \(1/2\) on every sequence. This is an NFL symmetry
statement, not a model of actual markets. Financial sign sequences are not
uniform in general, and a directional forecast does not determine trading
profit because payoff sizes, financing, and stopping rules matter
(Proposition~\ref{prop:optional}).

The defensible practical conclusion is conditional: a claim of repeatable
success must identify non-uniform structure in the physical measure
\(\mathbb{P}\), such as a trend, mean reversion, a risk premium, information,
or an institutional constraint that breaks the NFL symmetry. The hypothesized
edge concerns a restricted environment, not universal algorithmic
superiority.

\section{A Practical Robustness Diagnostic: Time Reversal}
\label{sec:criterion}

This section turns the trader--market leitmotif into a falsification tool. Hold
the trader fixed and transform only the market record. Among the countable
family of computably specified stress transformations, time reversal is one of
the simplest, particularly for detecting dependence on directional
persistence. It is not a further impossibility theorem on the level of the
Halting reductions or no-arbitrage.

Let \(\mathcal{M}\) be a space of strictly positive paths
\(P\colon[0,T]\to\mathbb{R}_{>0}^n\). The time reversal of \(P\) is
\begin{equation}
  \widetilde{P}(t)=P(T-t), \qquad t\in[0,T].
  \label{eq:timereversal}
\end{equation}
Literal reversal reuses the observed price levels and preserves the magnitudes
of log-price moves while reversing their order and signs. It is not the same
as merely negating arithmetic returns: if a forward gross return is \(R\), the
corresponding reversed gross return is \(1/R\). In an empirical implementation
the strategy must be rerun causally from scratch on the reversed record;
reversing the original orders or realized profit would leak future information.
For a causal strategy \(\sigma\), let \(\Pi_t[\sigma;P]\) denote its discounted
gain from zero initial capital through time \(t\) on \(P\). Call \(\sigma\)
\emph{pathwise admissible} on \(\mathcal M\) if some finite \(C\) satisfies
\(\Pi_t[\sigma;P]\geq-C\) for every \(P\in\mathcal M\) and
\(t\in[0,T]\).

\begin{definition}[Pathwise Universal Strategy]
\label{def:universal}
  A causal strategy \(\sigma\) is \emph{pathwise universal} on
  \(\mathcal{M}\) if it is self-financing and pathwise admissible there and
  satisfies \(\Pi_T[\sigma;P]>0\) for every \(P\in\mathcal{M}\).
\end{definition}

\begin{criterion}[Reversal Falsification Test]
  \label{crit:tr}
  Suppose \(\mathcal{M}\) is closed under time reversal. If there exists
  \(P\in\mathcal{M}\) such that
  \begin{equation}
    \min\{\Pi_T[\sigma;P],\Pi_T[\sigma;\widetilde P]\}\leq 0,
    \label{eq:reversaltest}
  \end{equation}
  then \(\sigma\) is not pathwise universal on \(\mathcal{M}\).
\end{criterion}

\begin{proof}
Both \(P\) and \(\widetilde P\) belong to \(\mathcal{M}\). A non-positive gain
on either member of the pair contradicts Definition~\ref{def:universal}.
\end{proof}

One non-winning path refutes a claim quantified over every path. Reversal
supplies a paired candidate, not a proof that one member must fail: a causal
strategy may take different positions on the reversed history.
Diagonalization and no-arbitrage provide the genuine impossibility results
under their respective assumptions.

\begin{remark}[Computable Stress Tests Beyond Reversal]
\label{rem:stressfamily}
Time reversal is one of countably many computable probes: sign reversal,
regime splicing, jumps, delays, volatility rescaling, financing, and transaction
costs. Since program totality is undecidable, an operational battery must select
known-terminating transformations whose outputs remain in the claimed market
class. Here ``test'' means empirical stress, not a Martin-L\"of test: a uniformly
computably enumerable sequence of open sets with effective measure bounds.
Reversal defines no null set and certifies no randomness; passing it shows only
tested robustness, not universality.
\end{remark}

\begin{remark}[Filtrations and Time Reversal]
Stochastic model classes are not automatically reversal-closed:
\(P(T-t)\) is generally not adapted to the forward filtration, and reversing a
diffusion requires additional hypotheses~\cite{HaussmannPardoux1986}. For
example, if
\[
  X_t=\log(S_t/S_0)=mt+\sigma W_t,
\]
then \(X_{T-t}-X_T\), under its reversed filtration, has drift \(-m\) and
volatility \(\sigma\), so a class containing both drift signs is closed in law
under reversed increments. Physically, \(m=\mu-\sigma^2/2\); under a
risk-neutral measure the discounted stock may be a martingale while its
logarithm retains It\^o drift \(-\sigma^2/2\). Neither an ELMM nor a driftless
arithmetic-Brownian special case establishes detailed balance in general. The
diagnostic remains pathwise; Section~\ref{sec:noarb} gives the probabilistic
constraints.
\end{remark}

\section{A Brief Social-Choice Analogy}
\label{sec:arrow}

Arrow's Impossibility Theorem and the trading result share only an
unrestricted-domain structure. With at least three alternatives, unrestricted
preference profiles, collective rationality, Pareto unanimity, and independence
of irrelevant alternatives force a social-welfare aggregation rule to be
dictatorial~\cite{Arrow1963}. Arrow quantifies over preference profiles;
pathwise universal trading quantifies over price paths. Neither theorem implies
the other, but both show that natural requirements can become jointly
untenable when imposed on every admissible input. In the fixed-point language
surveyed by Yanofsky~\cite{Yanofsky-BSL:9051621},
Eq.~\eqref{eq:adversary} anti-aligns the return with every nonzero position
while leaving the neutral response \(w_t=0\); that fixed point yields zero
gain, not strict profit.

\section{Case Study: The Wheel Options Strategy}
\label{sec:wheel}

The Wheel makes the trader--market configuration concrete. It sells a
cash-secured put and, after assignment, sells a covered call against the
acquired stock. Ignoring dividends and financing in the terminal-payoff
identity, put--call parity gives
\begin{equation}
  S_T-(S_T-K)^+
  =K-(K-S_T)^+
  =\min\{S_T,K\}.
  \label{eq:putcall}
\end{equation}
After the corresponding time-zero costs are matched, a covered call and a
cash-secured put therefore have the same European terminal payoff. Rolling the
Wheel repeatedly renews a put-like, short-downside-convexity exposure; it does
not manufacture directionless income.

Option selling can nevertheless have positive mean return under a physical
law. The
difference between risk-neutral and physical expectations of future variance
is commonly called the variance risk premium, with sign conventions varying
across the literature, and option-market evidence documents compensation for
bearing volatility and jump risk
\cite{BakshiKapadia2003,CarrWu2009}. Such compensation is empirical and
state-dependent: its sign and magnitude depend on option prices, strikes,
transaction costs, and the physical distribution.

Equation~\eqref{eq:putcall} exposes four practical failure regimes. A crash
after assignment produces essentially one-for-one downside below \(K\). A
persistent decline can make repeated premiums too small to offset the renewed
short-put exposure. In a large rally, assignment and call exercise can still
leave a positive absolute profit, but the cap at \(K\) can underperform a
buy-and-hold benchmark. Finally, capital exhaustion is absorbing for a
self-financing trader: if \(W_{t+1}=W_t(1+fR_{t+1})\) and
\(1+fR_{t+1}=0\), then \(W_{t+1}=0\) unless external capital is injected. A
fully cash-secured, unlevered implementation has bounded loss, but an
underlying that falls to zero can consume almost all committed capital net of
premium.

A rally/crash reversal is therefore a useful paired stress test, although the
payoff identity, rather than reversal alone, supplies the economic mechanism.
Repeated small premiums also need not imply satisfactory time-average growth:
\[
  \frac1T\log\frac{W_T}{W_0}
  =\frac1T\sum_{t=1}^T\log(1+fR_t),
\]
so frequent gains can coexist with poor long-run growth when rare losses are
large~\cite{Peters2019}. The Wheel may be a defensible FAPP strategy under a
specified risk premium, capital policy, cost model, and benchmark; it is not a
pathwise profit guarantee.

\section{Universal Portfolios and Distance from Universality}
\label{sec:cover}

Here the trader changes the claim made against the market side: it seeks small
regret to a benchmark portfolio class, not positive profit on every path. The
word ``universal'' also appears in a celebrated positive theorem.
For price-relative vectors \(x_t\in\mathbb{R}_{>0}^m\), a constant-rebalanced
portfolio \(b\) has wealth
\[
  S_T(b)=\prod_{t=1}^T b^\mathsf{T}x_t.
\]
Cover's universal portfolio is nonanticipating and, under the conditions of
his theorem, has the same asymptotic exponential growth rate as the best
constant-rebalanced portfolio chosen in hindsight on every individual bounded
sequence~\cite{Cover1991}. In regret notation,
\begin{equation}
  \frac1T\left[
  \log\max_{b}S_T(b)-\log S_T^{\mathrm{UP}}
  \right]\longrightarrow0.
  \label{eq:cover}
\end{equation}

Equation~\eqref{eq:cover} is a relative guarantee, not an absolute-profit
guarantee. If every constant-rebalanced portfolio loses money on a path, the
universal portfolio may lose as well while retaining vanishing per-period
regret. Cover's result therefore does not contradict
Corollary~\ref{cor:noarb}; it illustrates a useful benchmark-relative meaning
of ``universal'' once the comparison class is specified.

Online regret methods also yield financial guarantees of another kind.
DeMarzo, Kremer, and Mansour show that gradient strategies designed to
minimize asymptotic regret imply trading strategies that provide
arbitrage-based option price bounds without assuming continuous price paths,
complete markets, or a pricing kernel; the bounds depend on realized quadratic
variation
\cite{DeMarzoKremerMansour2016}. This is robust pricing and hedging, not
guaranteed positive profit or next-move prediction.

\paragraph{A Hierarchy of Attainable Claims}

The useful alternatives to pathwise universality are not weaker points on one
single order; they change the benchmark, horizon, or probability quantifier.
Table~\ref{tab:hierarchy} records the distinctions.

\begin{table*}[ht]
\caption{\label{tab:hierarchy}Different meanings of a successful or universal strategy.}
\centering
\begin{ruledtabular}
\begin{tabular}{p{0.22\textwidth}p{0.36\textwidth}p{0.34\textwidth}}
\textbf{Notion} & \textbf{Guarantee} & \textbf{Status} \\
\colrule
Pathwise universal profit
& Positive discounted zero-cost gain on every admissible path
& Excluded on any model class supporting classical NA \\
Relative arbitrage
& Outperformance of a specified market or numeraire portfolio
& Possible under structural market assumptions \\
Cover universality
& Vanishing per-period log regret to the best constant-rebalanced portfolio
& Pathwise but benchmark-relative \\
FAPP or statistical edge
& Positive performance under a specified physical law, data protocol, and horizon
& Conditional and empirically falsifiable \\
\end{tabular}
\end{ruledtabular}
\end{table*}

Superhedging duality gives another quantitative notion: once a contingent
claim and an admissible model class are fixed, its superhedging price is the
least initial capital needed for a pathwise guarantee. Requiring zero initial
capital while demanding a strictly positive payoff is precisely the
degenerate, arbitrage-producing endpoint ruled out above.

\section{FAPP Strategies and Their Honest Scope}
\label{sec:fapp}

Although pathwise universal profit is excluded, a strategy may be profitable
\emph{for all practical purposes} (FAPP, following Bell~\cite{bell-a}) relative
to a physical law \(\mathbb P\), a benchmark, and a finite horizon. Market
making may earn spreads when adverse selection and inventory risk are
controlled, and Kelly investing maximizes expected logarithmic growth when the
relevant distribution is sufficiently well specified~\cite{Kelly1956}. Neither
is an unconditional guarantee.

Here ``insight'' means a falsifiable restriction on the market side. In actual
trading it usually appears as a heuristic: persistence, mean reversion,
valuation correction, or compensation for an identified risk. The fact that a
heuristic is algorithmic does not disqualify it; its edge is a regime-dependent
empirical hypothesis. A repeatable ex ante claim must therefore specify its
market class, horizon, costs, benchmark, and validation protocol.

An informational advantage must be distinguished from prohibited insider
trading. Public disclosures, proprietary analysis, and lawfully acquired data
may generate beliefs not yet reflected in prices. Trading on information whose
acquisition or use is forbidden by applicable insider-trading or market-abuse
law lies outside the lawful practical strategy class considered here. Because
legal definitions vary by jurisdiction, the mathematical adjective
``informed'' establishes neither legality nor illegality; even superior lawful
information remains short of a pathwise guarantee.

The unit of performance matters as well. For a funded portfolio, let
\(R^{\mathrm{nom}}\) be nominal total return and \(\pi\) inflation over the same
interval under a stated price index. Then
\begin{equation}
  1+R^{\mathrm{real}}
  =\frac{1+R^{\mathrm{nom}}}{1+\pi}.
  \label{eq:real-return}
\end{equation}
A positive nominal return can therefore be a purchasing-power loss. The price
index is an economic deflator, not necessarily a tradable numeraire for the
FTAP; the consumption benchmark and financing account serve different
purposes.

Persistent productivity growth is one possible restriction supporting
positive real drift. Applied scientific and technological progress can raise
real output~\cite{Solow1957}; if the gains persist and reach listed firms as
after-cost cash flows in a stable institutional regime, they may support
positive expected long-horizon real total return on diversified productive
capital. This is not an arbitrage or a universal law: positive capital is
invested, expectations may already be priced, gains may accrue to consumers,
labor, entrants, or unlisted firms, discount rates change, and innovation can
displace incumbents~\cite{Schumpeter1942}. Aggregate real growth therefore
establishes neither next-move prediction nor active alpha.

Equilibrium theory likewise permits only conditional edges. Grossman and
Stiglitz show that perfectly informationally efficient markets are inconsistent
with costly information acquisition: if prices revealed all information
without reward, agents would not pay to produce it~\cite{GrossmanStiglitz1980}.
Conversely, under its common-prior, common-knowledge, risk-aversion, and
Pareto-optimality assumptions, the Milgrom--Stokey no-trade theorem limits trade
based on private information alone~\cite{MilgromStokey1982}. A complementary
market-selection result is also conditional: among agents in Sandroni's
complete-market model who share an intertemporal discount factor and choose
saving endogenously, accurate forecasters prosper and inaccurate ones are driven
out~\cite{Sandroni2000}. These equilibrium results depend on their model
assumptions; none is a pathwise trading guarantee or a substitute for the FTAP.

Deployment can also change the market side. A large, crowded, or public
strategy consumes liquidity, moves quotes, attracts imitation or opposition,
and changes the distribution on which it was estimated. Lo's Adaptive Markets
Hypothesis provides this evolutionary interpretation~\cite{Lo2004}: competition
and impact may erode an edge without producing the tailored countermarket of
Theorem~\ref{thm:counterpath}. Feedback invalidates scale-free extrapolation; it
does not imply that every large strategy loses.

Finally, the finite-pattern trap of Section~\ref{sec:ramseytrap} becomes
operational in strategy discovery. Trying many specifications and retaining the
best backtest turns its Sharpe ratio into an order statistic. The Deflated
Sharpe Ratio corrects for selection, non-normality, and the number of
trials~\cite{BaileyLopez2014}; untouched validation, costs, and deployment
analysis remain necessary. These controls do not prove an edge, but make a
finite-sample FAPP claim more honestly falsifiable.

\section{Conclusion}
\label{sec:conclusion}

Everything returns to the same pair: a trader and a market. For every fixed
total deterministic computable trader, Theorem~\ref{thm:counterpath} constructs
a fixed positive computable price path whose next bounded proportional move
opposes the trader's position. The quantifiers are
\(\forall\sigma\,\exists\mathcal E_\sigma\): no physical market must observe
the deployed orders, the always-flat trader earns zero, and every nonzero
position loses on its trader-relative countermarket. This direct diagonal
result uses no Halting-Problem reduction and identifies no single path that
defeats every trader.

The remaining limits concern different market sides. Gold precludes identifying
every computable binary generator in the limit, while next-bit diagonalization
defeats each fixed computable predictor on its predictor-relative record.
Halting reductions preclude uniform certification of eventual profit or encoded
future events, and Busy Beaver excludes a computable description-size waiting
horizon despite finitely solvable low-complexity cases. Under a stated
computable reference measure, Martin-L\"of randomness blocks unbounded success
by effective test-capital processes. Ramsey regularity explains why finite
structured subsequences nevertheless remain abundant, including in random
records. No-arbitrage and uniform binary averaging supply still different
boundaries. Their conclusions are complementary because their inputs,
quantifiers, and success criteria differ.

Positive results arise only after restricting the claim. The Wheel may earn a
risk premium while retaining put-like downside; Cover guarantees asymptotic
performance relative to a benchmark class; and FAPP strategies may exploit a
specified regime, lawful information, or compensated risk. Such performance
must be evaluated after costs and financing, against an explicit benchmark and,
where relevant, in purchasing-power terms. Productivity-supported real growth
is one possible regime hypothesis, not a next-move predictor or a guarantee of
active alpha.

Any repeatable-success claim must therefore state its market restriction,
probability law, benchmark, admissibility and financing conditions, horizon,
data protocol, and deployment effects. Time reversal is one robustness probe
among many. Useful strategies may exist, but no total deterministic computable
self-financing rule guarantees strict gain on every positive computable price
path; under an arbitrage-free stochastic model, no admissible zero-cost rule
delivers nonnegative discounted gain almost surely and positive gain with
positive probability.

\begin{acknowledgments}
This research was funded in whole or in part by the Austrian Science Fund
(FWF) [Grant DOI: 10.55776/PIN5424624]. For open access purposes, the author
has applied a CC BY public copyright license to any author accepted manuscript
version arising from this submission. The author acknowledges TU Wien
Bibliothek for financial support through its Open Access Funding Programme.

OpenAI Codex (GPT-5) was used to assist with literature organization, LaTeX
editing, and checks of derivations. The author specified the scientific
direction and prompts; model output retained in the manuscript was checked
against the cited sources and explicit calculations. The author accepts full
responsibility for the final text.
\end{acknowledgments}

\section*{Data Availability}
This is a purely mathematical work, and no data or software were created or
analyzed in this study.

\bibliography{svozil}
\end{document}